\numberwithin{equation}{section}
\def \er{\varepsilon}
\def \sconv{\xrightarrow[]{s}}
\renewcommand{\l}{\left}
\renewcommand{\r}{\right}
\def \GL{\mathrm{GL}}
\def \C{\mathbb{C}}
\def \Q{\mathbb{Q}}
\def \N{\mathbb{N}}
\def \M2{\mathrm{M}_2}
\def \R{\mathbb{R}}
\def \Z{\mathbb{Z}}
\def \T{\mathbb{T}}
\def \H{\mathcal{H}}
\def \sl2r{\mathrm{SL}(2,\R)}
\def \dirint {\int_{\T}^{\oplus}}
\def \dirintd {\int_{\T^d}^{\oplus}}
\DeclareMathOperator*{\slim}{s--lim}
\newcommand{\beq}{\begin{equation}}
\newcommand{\eeq}{\end{equation}}
\def\dom{\operatorname{Dom}}
\def\eff{\operatorname{eff}}
\def\wt{\widetilde}
\def\wt{\widetilde}
\DeclareMathOperator*{\esssup}{ess\,sup}
\def\im{\operatorname{Im}}
\newcommand{\eqdef}{\stackrel{\rm def}{=\kern-3.6pt=}}
\theoremstyle{plain}
\newtheorem{theorem}{\bf Theorem}[section]
\newtheorem{lemma}[theorem]{\bf Lemma}
\newtheorem{prop}[theorem]{\bf Proposition}
\newtheorem{cor}[theorem]{\bf Corollary}
\theoremstyle{definition}
\theoremstyle{remark}
\newtheorem{remark}[theorem]{\bf Remark}
\theoremstyle{cond}
\renewcommand{\le}{\leqslant}
\renewcommand{\ge}{\geqslant}
\newcommand{\dist}{\mathop{\mathrm{dist}}\nolimits}
\renewcommand{\qed}{\vrule height7pt width5pt depth0pt}
\title{On transport properties of isotropic quasiperiodic $XY$ spin chains}
\author{Ilya Kachkovskiy}
\begin{document}
	\maketitle
	\begin{abstract}
		We consider isotropic $XY$ spin chains whose magnetic potentials are quasiperiodic
		and the effective one-particle Hamiltonians have absolutely continuous spectra. 
		For a wide class of such $XY$ spin chains, we obtain lower bounds on their 
		Lieb--Robinson velocities in terms of group velocities of their effective Hamiltonians:
		$$
		\mathfrak v\ge\esssup\limits_{[0,1]} \frac{2}{\pi}\frac{dE}{dN},
		$$
		where $E$ is considered as a function of the integrated density of states.
	\end{abstract}
\section{Introduction}
An $XY$ spin chain is one of the most well understood models in many-body quantum 
physics. We will only consider isotropic $XY$ spin chains. For an integer interval $\Lambda=[m,n]\subset \Z$, define the Hamiltonian as
$$
H_{\Lambda}=-\sum\limits_{j=m}^n \l(\sigma_j^x\sigma_{j+1}^{x}+\sigma_j^y\sigma_{j+1}^{y}\r)-\sum\limits_{j=m}^n \nu_j \sigma_j^z,
$$
It acts in the state space
$$
\mathfrak G_{\Lambda}:=\otimes_{l=m}^n \C^2=:\otimes_{j=m}^n \mathfrak G_j,
$$
where $\mathfrak G_j$ is the {\it single site state space} identified with $\C^2$, and the matrices $\sigma_j^{x,y,z}$ are the standard Pauli matrices
$$
\sigma^x=\begin{pmatrix}0&1\\1&0\end{pmatrix},\quad
\sigma^y=\begin{pmatrix}0&-i\\i&0\end{pmatrix},\quad 
\sigma^z=\begin{pmatrix}1&0\\0&-1\end{pmatrix}
$$
acting at the respective sites, so that
$$
\sigma_j^{x,y,z}=I^{\otimes (j-1)}\otimes \sigma^{x,y,z}\otimes I^{\otimes (n-m-j)}.
$$ 
Finally $\{\nu_j\}_{j\in \Lambda}$ is a sequence of real numbers which is the magnetic 
potential.

There have been several interesting developments regarding transport properties in this 
model. The most well known and simple one is the {\it Lieb-Robinson bound}.  If $S\subset \Lambda$, then a {\it local observable} with respect to $S$ is any operator of the form
$$
A\otimes (\otimes_{j\in \Lambda\setminus S}I),
$$
where $A$ acts in $\otimes_{j\in S}\mathfrak G_j$. We denote the algebra of all local observables on $S$ by $\mathcal O(S)$. Note that, formally speaking, this algebra depends on $\Lambda$, but there is a natural correspondence between $\mathcal O(S)$ for different $\Lambda$, so we use the same notation for them. If $A$ is an observable, then
\beq
\label{heisenberg}
A(t):=e^{it H_{\Lambda}}A e^{-it H_{\Lambda}}
\eeq
is the Heisenberg evolution of $A$. Again, it depends on $\Lambda$. However, there are 
certain results (such as the following proposition) that hold for all $\Lambda$, in 
which case we drop the dependence on $\Lambda$ from the notation. The following is 
established in \cite{LiebRob,NachtSims}.
\begin{prop}
	\label{liebrob}
	Suppose that $\{\nu_j\}_{j\in \Z}$ is a bounded real sequence. There exist constants 
	$\eta, \mathfrak{v},C$ depending only on $\sup_{j\in \Z} |\nu_j|$, such that for any finite $\Lambda\subset \Z$ and any two observables $A\in \mathcal O(S_1)$, $B\in \mathcal O(S_2)$, $S_1,S_2\subset \Lambda$, we have
	\beq
	\label{liebrobbound}
	\|[A(t),B]\|\le C\|A\|\|B\| e^{-\eta(\mathfrak{v}t-\dist(S_1,S_2))}.
	\eeq
\end{prop}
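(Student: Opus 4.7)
The plan is to prove \eqref{liebrobbound} by the standard differential-inequality argument of Lieb--Robinson and Nachtergaele--Sims, viewing $H_\Lambda$ as a local interaction. Concretely, one writes
$$
H_\Lambda=\sum_{Z\subset\Lambda}\Phi(Z),\quad \Phi(\{j,j+1\})=-\l(\sigma_j^x\sigma_{j+1}^x+\sigma_j^y\sigma_{j+1}^y\r),\quad \Phi(\{j\})=-\nu_j\sigma_j^z,
$$
so that every $\Phi(Z)$ has range $1$ and norm at most $J:=2+\sup_{j\in\Z}|\nu_j|$. All subsequent estimates will depend only on this single parameter $J$.

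Next I fix $A\in\mathcal O(S_1)$, $B\in\mathcal O(S_2)$ and set $F(t):=[A(t),B]$. Using \eqref{heisenberg} and the Jacobi identity,
$$
\tfrac{d}{dt}F(t)=i[[H_\Lambda,A(t)],B]=i\sum_{Z}[\Phi(Z),F(t)]-i\sum_Z [A(t),[\Phi(Z),B]].
$$
The first sum is generated by a unitary conjugation and so does not increase $\|F(t)\|$. Only $Z$ meeting $S_2$ contribute to the second sum, and each such term is bounded by $2\|\Phi(Z)\|\,\|A\|\,\|B\|$ if $A$ is replaced by a suitably unitarily conjugated copy. This yields the integral inequality
$$
\|F(t)\|\le \|[A,B]\|+2\|B\|\sum_{Z:\,Z\cap S_2\neq\emptyset}\|\Phi(Z)\|\int_0^{|t|}\|[\wt A_Z(s),\Phi(Z)]\|/\|B\|\, ds,
$$
where each $\wt A_Z(s)$ is a unitary conjugate of $A$, so $\|\wt A_Z(s)\|=\|A\|$.

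Iterating this inequality $n$ times produces a sum over chains $Z_1,\ldots,Z_n$ such that $Z_1\cap S_2\neq\emptyset$, consecutive $Z_i$'s overlap, and the last $Z_n$ intersects $S_1$. Because each $\Phi(Z)$ has range $1$, any such chain has length at least $\dist(S_1,S_2)$; the number of chains of length $n$ reaching $S_1$ is bounded by $|S_2|\,c^n$ for a universal combinatorial constant $c$, and each carries a factor $(2J|t|)^n/n!$ from the nested time integrals. Summing the resulting series gives
$$
\|F(t)\|\le 2\|A\|\|B\|\sum_{n\ge \dist(S_1,S_2)}\frac{(2cJ|t|)^n}{n!},
$$
and a standard tail estimate for the remainder of the exponential series (or a Stirling bound) turns this into the desired $C\|A\|\|B\|e^{-\eta(\mathfrak v t-\dist(S_1,S_2))}$ with $\mathfrak v,\eta,C$ depending only on $J$, hence only on $\sup_{j\in\Z}|\nu_j|$.

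The main obstacle, and the step I expect to absorb most of the work, is the combinatorial path-counting needed to bound the iterated sums uniformly in $\Lambda$. Once the exponent--factorial balance is done cleanly, the extraction of explicit $\eta$ and $\mathfrak v$ depending only on $\sup_{j}|\nu_j|$ is straightforward, and the bound is manifestly independent of $\Lambda$ since none of the combinatorial estimates see the endpoints $m,n$.
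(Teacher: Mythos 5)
Your sketch follows the standard Lieb--Robinson/Nachtergaele--Sims argument --- range-one interaction decomposition, Jacobi identity plus Duhamel to isolate the non-unitary piece, iteration producing a sum over paths connecting $S_1$ and $S_2$, and the factorial--exponential balance in the path count --- and this is precisely what the paper relies on, since Proposition~\ref{liebrob} is cited to \cite{LiebRob,NachtSims} rather than reproved. One slip in the write-up: the displayed integral inequality should have $\|[\wt A_Z(s),[\Phi(Z),B]]\|$ as the iterand (the nested observable $[\Phi(Z),B]$, supported on $Z\cup S_2$ with norm at most $2\|\Phi(Z)\|\|B\|$, plays the role of the new ``$B$'' at the next step), rather than $\|[\wt A_Z(s),\Phi(Z)]\|$, and the $2\|B\|\cdot(\;\cdot\;)/\|B\|$ prefactors do not balance as written; but this is clearly a transcription error, and the skeleton of the argument and the final tail estimate are correct.
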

The existence of such bound indicates that, even though the propagation speed of the 
Heisenberg evolution is infinite (local observables may become non-local immediately), 
one can still obtain an exponential bound on the tail that is sufficiently far away. 
In other words, the ``physically effective'' speed 
of propagation is still finite, regardless of the potential. Since there is always an 
upper bound, there are three interesting regimes of the transport behavior: the possible 
velocity can be bounded from below, can be made arbitrarily small, or can be made zero. The first case corresponds to the ballistic transport, the last case is related to localization, and the third case is an intermediate situation called anomalous transport.
To describe these properties in more detail, let us note that, in a certain sense, the $XY$ spin chain is a
completely integrable system. The Jordan--Wigner transform reduces the study of this model to the study of  the following {\it effective one-particle Hamiltonian} in $l^2(\Z)$,
$$
(H_{\eff}\psi)_n=\psi_{n+1}+\psi_{n-1}+\nu_j \psi_j.
$$
We refer the reader to \cite{HSS,D_per} for the description of this transformation. We also give some additional remarks in Section 6. 

The transport properties of the spin chain are related to those of the 
effective Hamiltonian. The zero-velocity bound is a consequence of {\it dynamical Anderson localization}, which corresponds to purely point spectrum of the effective Hamiltonian, see, for example, \cite{HSS,SS,CS}.
In the paper \cite{D_fib}, a system with effective quasiperiodic Fibonacci Hamiltonian was extensively studied, and it was established that it displays anomalous transport, where one 
needs to replace $t$ by $t^{\alpha}$ in the exponent of the bound. This corresponds to singular continuous spectrum of $H_{\eff}$. Finally, in \cite{D_per}, the case of periodic 
effective Hamiltonian was studied, and it was established that these systems admit lower 
bound on Lieb--Robinson velocity. This situation corresponds to absolutely continuous spectrum. The result of \cite{D_per} is also proved for anisotropic $XY$ spin chains.

While the transport properties of Schr\"odinger operators on $\Z$ are studied quite well, 
not all of them translate easily to the $XY$ chain case. The reason is that the 
Jordan--Wigner transformation is not local, and so, the bounds on time-averaged transport 
exponents are not sufficient due to possible spreading of wave packets. The result of \cite{D_per} for periodic potentials was obtained by showing that the lower bound on Lieb-Robinson velocity follows from existence of the following strong limit:
\beq
\label{q_def}
Q=\slim_{T\to+\infty}\frac{1}{T}\int_0^T e^{iH_{\eff}t} A e^{-iH_{\eff}t}\,dt,
\eeq
where $(A\psi)_n=i(\psi_{n+1}-\psi_{n-1})$. In this case, in any Lieb-Robinson bound we must have $\mathfrak v\ge 2\|Q\|$\footnote{The estimate in \cite{D_per} has the form $\mathfrak v\ge \|Q\|$ because the effective Hamiltonian in their notation is $2H_{\eff}$ in ours. It is convenient for us to have the off-diagonal part of $H_{\eff}$ being the usual discrete Laplacian.}.
Existence of this operators is the strongest form of ballistic transport: 
it implies that all non-averaged lower transport exponents are equal to 1. The relation 
between $Q$ and transport properties was first observed in \cite{Knauf}.

In the present paper, we study the $XY$ spin chain with quasiperiodic multi-frequency effective Hamiltonian
$$
(H(x)\psi)_n=\psi_{n+1}+\psi_{n-1}+v(x+n\alpha)\psi_n.
$$
where $\alpha$ is an irrational frequency vector and $v$ is a continuous function on a $d$-dimensional torus. We show that, under the assumption of $L^2$ degree 0 reducibility of the corresponding Schr\"odinger cocycle, there exists a non-trivial bound on possible values of $\mathfrak v$ in Theorem \ref{liebrob}. The assumptions of the theorem hold in many cases 
where absolute continuity of the spectrum of $H_{\eff}$ is known: for example, for 
analytic one-frequency potentials with Diophantine frequencies, and for analytic 
multi-frequency potentials at (perturbatively) small coupling. The results on concrete operators are summarized in Corollaries \ref{concrete1} and \ref{concrete2}.

Unlike the case of \cite{D_per}, we initially only establish existence of the phase-averaged 
version of $Q$. This implies that the limit \eqref{q_def} exists on a subsequence of time 
scales, which is still enough to obtain a velocity bound. The drawback is that it does not imply ballistic transport for $H(x)$, but implies one replaced by a phase-averaged version, see Remark \ref{ballexp_rem}.

We also give an explicit description of the lower bound in terms of the group velocity for the effective Hamiltonian:
$$
\mathfrak v\ge\esssup\limits_{[0,1]} \frac{2}{\pi}\frac{dE}{dN},
$$
where $\frac{dE}{dN}$ is the derivative of the inverse function of the integrated density of states of $H_{\eff}$. See Theorem \ref{groupvelth} for precise statement.

In Section 2, we give necessary definitions in order to describe the class of quasiperiodic 
operators we are going to work with. In Section 3, we formulate the main results both in the 
language of the effective Hamiltonian and of the $XY$ spin chain. We also describe 
several concrete classes of operators satisfying our assumptions. In Section 4, we summarize the properties of Aubry duality and of $L^2$ degree 0 reducible operators that are relevant for the proof of the main results. In Section 5, we prove Theorems \ref{main} and \ref{groupvelth}. In Section 6 we explain the main steps of translating the language of 
the effective Hamiltonians to the language of $XY$ spin chains and prove Corollary \ref{lieb_cor}.
\section{$L^2$-degree 0 reducible quasiperiodic operators}
The main result will be formulated for an abstract class of quasiperiodic $d$-frequency operators. In order to formulate the results, we first need to introduce this class.

Let $v\colon \T^d\to \R$ be a continuous function. We will also consider $v$ as a $\Z^d$-periodic continuous 
function on $\R^d$. A $d$-frequency one-dimensional quasiperiodic operator family is a collection of operators of the form
\beq
\label{h_def}
(H(x)\psi)_n=\psi_{n+1}+\psi_{n-1}+v(x+n\alpha)\psi_n, \quad n\in \Z^d,
\eeq
where $\alpha=(\alpha_1,\ldots,\alpha_d)\in \R^d$ is a vector of frequencies, and $n\alpha=(n_1\alpha_1,\ldots,n_d\alpha_d)$. We assume that the set $\{1,\alpha_1,\ldots,\alpha_d\}$ is linearly independent over $\Q$, in which case it is an ergodic 
operator family with respect to the dynamics $x\mapsto x+\alpha$ on $\T^d$.
The eigenvalue equation
\beq
\label{eiv_eq}
\psi_{n+1}+\psi_{n-1}+ v(x+n\alpha) \psi_n =E\psi_n
\eeq
can be written in the following form involving transfer matrices,
$$
\begin{pmatrix}\psi_n\\ \psi_{n-1}\end{pmatrix}=\l(\prod_{j=n-1}^{0}S_{v,E}(x+j\alpha)\r)\begin{pmatrix}\psi_{0}\\ \psi_{-1}\end{pmatrix},
$$
where
$$
S_{v,E}(x)=\begin{pmatrix}
E-v(x)&-1\\
1&0
\end{pmatrix},
$$
and the pair $(\alpha,S_{v,E})$ is called a Schr\"odinger cocycle understood as a map $(\alpha,S_{v,E}):\T^d\times \C^2\to
\T^d\times \C^2$ given by
$(\alpha,S_{v,E}):(x,w) \mapsto (x+\alpha,S_{v,E}(x) \cdot
w)$. 
Replacing $S_{v,E}$ with $A \in \sl2r$ gives a definition of an $\sl2r$-cocycle. 

For any Borel subset $\Delta\subset \R$, define {\it density of states measure} of the set $\Delta$ as
$$
N(\Delta):=\int\limits_{\T^d}(\mathbb E_{H(x)}(\Delta)\delta_0,\delta_0)\,dx,
$$
where $\mathbb E_{H}(\Delta)$ is the spectral projection of a self-adjoint operator $H$ in $l^2(\Z)$. 
The {\it integrated density of states} is defined as
$$
N(E):=N((-\infty,E))=N((-\infty,E]),\quad E\in \R.
$$
It is known that $N(E)=1-2\rho(E)$, where $\rho(E)$ is the fibered rotation number of the cocycle $(\alpha,S_{v,E}(x))$.

We call an operator 
family \eqref{h_def} $L^2$-{\it degree $0$ redicible} if, {\it for almost every $E$ with respect to the density of states measure}, there exists $B(\cdot,E)\in L^2(\T^d;\GL(2,\C))$ such 
that $|\det B(x,E)|=1$ and
\beq
\label{reducibility}
B(x+\alpha,E)S_{v,E}(x)B(x,E)^{-1}=A_{\star}\quad \text{for a. e.}\quad x\in \T^d,
\eeq
where 
\beq
\label{astar}
A_{\star}=\begin{pmatrix}
e^{2\pi i \rho(E)}&0\\ 0& e^{-2\pi i \rho(E)}.
\end{pmatrix}
\eeq
\section{Main results}
The following is the main result of the paper in terms of the effective quasiperiodic Hamiltonian.
\begin{theorem}
	\label{main}
	Let $H(x)$ be an $L^2$-degree $0$ reducible quasiperiodic operator family. Let also $(A\psi)_n=i(\psi_{n+1}-\psi_{n-1})$. There exists a full Lebesgue measure subset $\T_0\subset \T^d$ and a sequence $T_k\to +\infty$ as $k\to +\infty$ such that
	$$
	\frac{1}{T_k}\int_0^{T_k} e^{iH(x)t}Ae^{-i H(x)t}\,dt\sconv Q(x),\quad  \forall x\in \T_0,
	$$
	where $Q(x)$ is a bounded operator with trivial kernel for all $x\in \T_0$, and $\|Q(x)\|$ is constant on $\T_0$.
\end{theorem}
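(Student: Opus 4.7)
The plan is to use $L^2$-degree $0$ reducibility, via Aubry duality, to construct a Bloch-type diagonalization of the direct integral $\mathbf{H}:=\dirintd H(x)\,dx$ acting on $L^2(\T^d;\ell^2(\Z))$, establish strong convergence of the phase-averaged Cesaro average
$\mathbf{Q}_T:=\frac{1}{T}\int_0^T e^{i\mathbf{H}t}\mathbf{A}e^{-i\mathbf{H}t}\,dt$
(with $\mathbf{A}:=\dirintd A\,dx$) as $T\to\infty$, and then pass to a subsequence giving pointwise strong convergence $Q_{T_k}(x)\sconv Q(x)$ for almost every $x\in\T^d$. The two structural properties of $Q(x)$ are then verified from covariance of the family and from the explicit form of the limit in the Bloch representation.

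\smallskip\noindent\textbf{Step 1 (Bloch diagonalization).} The material promised in Section 4 combines the reducing conjugator $B(\cdot,E)$ from \eqref{reducibility} with Aubry duality to give, for $dN$-almost every $E$, a pair of generalized Bloch eigenfunctions $\psi_{\pm}(n,x,E)=e^{\pm 2\pi i n\rho(E)}u_{\pm}(x+n\alpha,E)$ with $u_{\pm}(\cdot,E)\in L^2(\T^d)$ solving $H(x)\psi_{\pm}=E\psi_{\pm}$. Assembled over $E$ these produce a unitary $\mathcal{W}\colon L^2(\T^d;\ell^2(\Z))\to L^2([0,1],dN;\C^2)$ intertwining $\mathbf{H}$ with multiplication by $E$; under $\mathcal{W}$ the commutant of $\mathbf{H}$ is the algebra of $\M2(\C)$-valued bounded measurable multipliers in $E$.

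\smallskip\noindent\textbf{Step 2 (Convergence and subsequence extraction).} In the Bloch representation $\mathbf{H}$ is multiplication by $E$ on an atomless space (since the spectrum is absolutely continuous), and a standard mean ergodic argument---Riemann--Lebesgue applied in the spectral representation, separating the decomposable part of $\mathcal{W}\mathbf{A}\mathcal{W}^{-1}$ from its continuous off-diagonal part---gives, for every $\xi\in L^2(\T^d;\ell^2(\Z))$, strong convergence $\mathbf{Q}_T\xi\to\mathbf{Q}\xi$, where $\mathbf{Q}$ is the decomposable (fiberwise $E$-diagonal) part of $\mathcal{W}\mathbf{A}\mathcal{W}^{-1}$. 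Since $\mathbf{Q}_T=\dirintd Q_T(x)\,dx$ is already fiberwise in $x$ with $(\mathbf{Q}_T\xi)(x)=Q_T(x)\xi(x)$, applying strong convergence to constant sections $\xi\equiv\eta$ for each fixed $\eta\in\ell^2(\Z)$ gives $L^2(\T^d;\ell^2(\Z))$-convergence of $x\mapsto Q_T(x)\eta$. A standard diagonal extraction over a countable dense set of $\eta$'s then produces a single sequence $T_k\to\infty$ and a full-measure set $\T_0\subset\T^d$ on which $Q_{T_k}(x)\sconv Q(x)$ for all $x\in\T_0$, with the uniform bound $\|Q_{T_k}(x)\|\le\|A\|=2$.

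\smallskip\noindent\textbf{Step 3 (Properties of $Q(x)$ and main obstacle).} The covariance $S^{-1}H(x)S=H(x+\alpha)$ and $S^{-1}AS=A$ ($S$ the unit shift) give $Q_T(x+\alpha)=S^{-1}Q_T(x)S$, and hence $Q(x+\alpha)=S^{-1}Q(x)S$ on an $\alpha$-invariant full-measure refinement of $\T_0$; so $\|Q(\cdot)\|$ is $\alpha$-invariant and, by ergodicity of the rotation, a.e.\ constant, and shrinking $\T_0$ to the corresponding level set gives constancy on $\T_0$ itself. For triviality of $\ker Q(x)$, time-reversal symmetry (reality of $H$ together with $TAT^{-1}=-A$) forces the $2\times 2$ fiber matrix of $\mathbf{Q}$ into a one-parameter skew-symmetric form, whose single real entry is identified in Theorem \ref{groupvelth} with a nonzero multiple of the group velocity $\frac{dE}{dN}$; since absolute continuity of the spectrum makes $N$ strictly increasing on its essential support, this entry vanishes only on a $dN$-null set, giving injectivity of $Q(x)$. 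The main obstacle lies in Step 1: extracting an honest Bloch basis from $L^2$ (rather than the more familiar analytic) reducibility with enough regularity for $\mathcal{W}$ to be unitary and for the fiberwise identification of $\mathbf{Q}$ to go through---once this is set up, the mean-ergodic step and the diagonal subsequence extraction are essentially automatic.
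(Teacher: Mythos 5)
Your overall strategy is genuinely different from the paper's and has gaps in each of its three steps. The paper does \emph{not} build a Bloch-Floquet diagonalization of the direct integral. Instead it conjugates by the Aubry duality unitary $\mathcal U$ of \eqref{u_def} and observes that, under the hypothesis of $L^2$-degree $0$ reducibility, the dual operator $\wt H(\theta)$ has \emph{purely point} spectrum for a.e.\ $\theta$ (Property~4 of Section~4), with explicitly known eigenvectors $u(\theta,k)$. The Cesaro convergence then follows from the elementary Lemma~\ref{pointspectrum} for pure-point operators, and the kernel triviality from a direct computation of the diagonal matrix elements of $\wt Q(\theta)$, which come out equal to $d(\theta+k\alpha)\ne 0$ (Property~5). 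Your plan replaces this with a mean-ergodic argument in the a.c.\ spectral representation of $\mathbf H$, which is a different and substantially harder route.

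Concretely: in Step 1 you acknowledge that constructing the unitary $\mathcal W$ from $L^2$ reducibility is the ``main obstacle,'' and indeed it is not filled in; the paper bypasses it entirely because duality to the pure-point side does not require assembling generalized Bloch eigenfunctions of $H(x)$ into an honest eigenfunction transform. In Step 2, the claim that $\mathbf{Q}_T\to\mathbf{Q}$ strongly, with $\mathbf{Q}$ the ``decomposable part'' of $\mathcal W\mathbf A\mathcal W^{-1}$, is not automatic: for an operator with purely a.c.\ spectrum the Cesaro average $\frac1T\int_0^T e^{iHt}Ae^{-iHt}\,dt$ need not converge strongly for a generic bounded $A$, and proving that the ``diagonal part in $E$'' of $\mathcal W\mathbf A\mathcal W^{-1}$ is a well-defined bounded multiplier is precisely the kind of work the pure-point dual makes unnecessary. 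In Step 3, the injectivity argument is circular: you invoke the identification of the fiber entry with $dE/dN$ from Theorem~\ref{groupvelth}, but in the paper Theorem~\ref{groupvelth} is proved \emph{after} and \emph{using} Theorem~\ref{main}; moreover your argument implicitly assumes absolute continuity of the density of states, which is a hypothesis of Theorem~\ref{groupvelth} but not of Theorem~\ref{main}. The paper instead proves $\ker\wt Q(\theta)=\{0\}$ directly from $d(\theta)\ne 0$, which is available from the reducibility data alone. Your Step 3 covariance argument for a.e.-constancy of $\|Q(x)\|$ does match the paper's.
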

The integrated density of states $N(E)$ is a non-decreasing function of $E$. The inverse 
function $E(N)$ is defined on $[0,1]$ except maybe for a countable set of points of the form
$\{\alpha_1\Z+\ldots+\alpha_d \Z\}$ which correspond to gaps in $\sigma(H)$. We can define 
$E(N)$ arbitrarily at these points so that the resulting function is non-decreasing on $
[0,1]$ and hence differentiable almost everywhere on $[0,1]$.
\begin{theorem}
	\label{groupvelth}
	Under the assumptions of Theorem $\ref{main}$, suppose that the density of states measure of the family $H$ is absolutely continuous. Then, for almost every $x\in \T^d$,
	\beq
		\label{groupvel}
		\|Q(x)\|=\frac{1}{\pi} \esssup\limits_{[0,1]}\frac{dE}{dN}.
	\eeq
\end{theorem}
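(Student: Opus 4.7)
The strategy is to diagonalize $H(x)$ using the $L^{2}$-reducibility, transport the velocity operator $A$ into the resulting Bloch-type spectral picture, and observe that Cesaro time-averaging along $T_{k}$ isolates the diagonal of the kernel of $A$, whose fiber norm equals the group velocity $\tfrac{1}{\pi}|dE/dN|$.

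Concretely, I would begin with the reducibility relation \eqref{reducibility} to extract, for $dN$-a.e.\ $E$, two generalized Bloch eigenfunctions of $H(x)$ of the form
$$
\phi_{\pm}(n,x,E)=e^{\pm 2\pi i n\rho(E)}\,b_{\pm}(x+n\alpha,E),\qquad b_{\pm}(\cdot,E)\in L^{2}(\T^{d}),
$$
where $(b_{+}(x,E),b_{-}(x,E))^{T}$ is, up to normalization, a column of $B(x,E)^{-1}$. Absolute continuity of $dN$ together with ergodicity of $x\mapsto x+\alpha$ should then allow me to assemble these waves, for a.e.\ $x\in\T^{d}$, into a unitary generalized eigenfunction expansion
$$
\mathcal U_{x}\colon l^{2}(\Z)\longrightarrow L^{2}\bigl([0,1],dt;\C^{2}\bigr),\qquad \mathcal U_{x}H(x)\mathcal U_{x}^{*}=M_{E(t)},
$$
with $M_{E(t)}$ denoting multiplication by the inverse IDS; the spectral multiplicity is $2$, since both columns of $B(x,E)^{-1}$ contribute. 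Under $\mathcal U_{x}$, the operator $A$ becomes a bounded operator with a $2\times 2$-matrix-valued integral kernel $K(s,t)$ on $[0,1]^{2}$.

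Next, a Hellmann--Feynman-type computation, based on the identity $A=i[H(x),X]$ and the Bloch form of $\phi_{\pm}$, should yield that at each differentiability point of $E(\cdot)$ the on-diagonal value $K(t,t)$ is a $2\times 2$ matrix with
$$
\|K(t,t)\|=\tfrac{1}{\pi}|E'(t)|.
$$
The factor $\tfrac{1}{\pi}$ comes from the normalization $N=1-2\rho$: the natural quasimomentum is $k=2\pi\rho$, and $dE/dk=-\tfrac{1}{\pi}\,dE/dN$. On the spectral side, the Cesaro mean
$$
\frac{1}{T_{k}}\int_{0}^{T_{k}}e^{iH(x)t}Ae^{-iH(x)t}\,dt
$$
transcribes to the kernel $K(s,t')\cdot\tfrac{1}{T_{k}}\int_{0}^{T_{k}}e^{i(E(s)-E(t'))t}\,dt$; absolute continuity of $dN$ should force the off-diagonal part $\{E(s)\neq E(t')\}$ to vanish strongly along a suitable subsequence $T_{k}$, while the diagonal persists as multiplication by $K(t,t)$. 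This identifies $Q(x)$ with $\mathcal U_{x}^{*}M_{K(t,t)}\mathcal U_{x}$, and unitarity of $\mathcal U_x$ gives
$$
\|Q(x)\|=\esssup_{t\in[0,1]}\|K(t,t)\|=\tfrac{1}{\pi}\esssup_{[0,1]}|E'(t)|,
$$
which is \eqref{groupvel}.

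The hard part is the rigorous Hellmann--Feynman identification of $K(t,t)$. Since $B(\cdot,E)$ is only $L^{2}$ in $x$, the Bloch functions $\phi_{\pm}$ are not pointwise differentiable in $E$, and the energy derivative has to be taken within the direct-integral picture---for example, after testing against smooth compactly supported cutoffs in $t$---rather than site by site. Once that is pushed through, realizing the fiber-diagonal as the strong limit along $T_{k}$ and reading off its operator norm are standard consequences of spectral averaging and unitarity of $\mathcal U_{x}$.
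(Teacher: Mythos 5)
Your proposal identifies the right formula and offers a plausible heuristic route, but it takes a genuinely different path from the paper's proof, and the central step has a gap that you yourself flag.

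The paper does not construct a fiberwise Bloch transform $\mathcal U_x$ and never attempts a Hellmann--Feynman differentiation. Instead it works on the Aubry-dual side: by Lemma \ref{dual_convergence}, the dual operator $\wt H(\theta)$ has purely point spectrum and $\wt Q(\theta)$ is the diagonal of $\wt A(\theta)$ in the eigenbasis, so $\|\wt Q(\theta)\|=\sup_{k\in\Z^d}|d(\theta+k\alpha)|$ where $d(\theta)$ is the ``Wronskian'' built from the conjugating matrix $B$. The explicit form \eqref{b_def} gives $|d(\theta)|=4\big/\int_{\T^d}\|B(x,\theta)^{-1}\|^2_{\mathrm{HS}}\,dx$, and then Kotani theory supplies the link to the IDS: the formula $\frac{dN}{dE}=\frac{1}{2\pi}\int\frac{dx}{\im m(E,x)}$ combined with the identity \eqref{hyp_dos} and the observation that $\wt B(x,\theta)^{-1}\cdot i=m(x,E)$ (proved by showing the alternative forces $\theta$ into a countable excluded set) yields $\int\|\wt B\|^2_{\mathrm{HS}}\,dx=4\pi\,dN/dE$. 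Substituting gives the $\frac{1}{\pi}\,dE/dN$ formula, with the essential supremum arising because the orbit $\{\theta+k\alpha\}$ is dense.

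Two concrete issues with your route. First, your $\mathcal U_x$ is posited rather than built: with $B(\cdot,E)$ only in $L^2(\T^d)$, the Bloch waves $\phi_\pm(n,x,E)$ are not pointwise controlled, and verifying orthogonality and completeness fiber by fiber is not free. The paper avoids this entirely by using the direct-integral Aubry duality $\mathcal U$ and then descending to fibers via Proposition \ref{strong_conv}. Second, the identification $\|K(t,t)\|=\frac{1}{\pi}|E'(t)|$ via a Hellmann--Feynman argument is exactly the step you concede is ``the hard part''; since $b_\pm(\cdot,E)$ is merely measurable in $E$, no $E$-derivative of the Bloch data is available, and the weak-testing workaround you sketch is not carried out. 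The paper replaces this step by the Kotani $m$-function identity \eqref{dnde}, which is a known theorem and requires no differentiation of the reducing matrix in $E$. So while the target formula and the Cesaro-averaging picture are correct, as written your argument does not close; the Kotani input (or an equivalent rigorous substitute for the Hellmann--Feynman step) is missing.
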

\begin{cor}
	\label{lieb_cor}
	Suppose that an isotropic $XY$ spin chain has effective Hamiltonian satisfying the 
	assumptions of Theorem $\ref{main}$. Then \eqref{liebrobbound} can only hold for all $\Lambda\subset \Z$ if $\mathfrak v\ge 2\|Q(x)\|$.
\end{cor}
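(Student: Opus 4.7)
My plan is to translate \eqref{liebrobbound} into a bound on matrix elements of the one-particle Heisenberg evolution via the Jordan--Wigner transformation (Section~6) and then combine it with the strong limit produced by Theorem~\ref{main}. Under Jordan--Wigner, $H_\Lambda$ becomes a number-preserving fermion bilinear whose restriction to the single-particle subspace of Fock space equals $2H_{\eff,\Lambda}$ (up to a constant), so $e^{itH_\Lambda}$ acts as $e^{2itH_{\eff,\Lambda}}$ on one-particle states: the XY Heisenberg evolution at time $t$ is the effective evolution at doubled time $2t$. This rescaling is what produces the factor $2$ in $\mathfrak v\ge 2\|Q(x)\|$.

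Concretely I would pair the local observables $\mathcal A_{j_0}=i(\sigma^+_{j_0}\sigma^-_{j_0+1}-\sigma^-_{j_0}\sigma^+_{j_0+1})\in\O(\{j_0,j_0+1\})$ and $\mathcal B_k=\tfrac{1-\sigma^z_k}{2}\in\O(\{k\})$, which correspond under Jordan--Wigner to the fermion bilinears $c^* P_{j_0}c$ and $c^*_k c_k$, respectively, where $P_{j_0}=i(|\delta_{j_0+1}\rangle\langle\delta_{j_0}|-|\delta_{j_0}\rangle\langle\delta_{j_0+1}|)$ is a single bond of $A=\sum_j P_j$. Evaluating $[\mathcal A_{j_0}(t),\mathcal B_k]$ between the one-particle Fock vectors $c^*_{j_0}|0\rangle$ and $c^*_k|0\rangle$ (with $k\notin\{j_0,j_0+1\}$) yields
\[
\bigl|\langle \delta_k,\, e^{2itH_{\eff,\Lambda}}\,P_{j_0}\,e^{-2itH_{\eff,\Lambda}}\,\delta_{j_0}\rangle\bigr|\le \|[\mathcal A_{j_0}(t),\mathcal B_k]\|\le Ce^{-\eta(\mathfrak v t-|k-j_0|)},
\]
the last inequality being \eqref{liebrobbound}. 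Time-averaging over $t\in[0,T_m]$ and performing the change of variable $s=2t$ translates this into a bound on $\frac{1}{2T_m}\int_0^{2T_m}|\langle\delta_k,e^{isH_{\eff}}P_{j_0}e^{-isH_{\eff}}\delta_{j_0}\rangle|\,ds$; summing over a finite window of $j_0$'s to reconstruct the action of $A$ on $\delta_{j_0}$ and invoking Theorem~\ref{main} then produces the matrix-element bound $|\langle\delta_k,Q(x)\delta_{j_0}\rangle|\le Ce^{-\eta(\mathfrak v T_m-|k-j_0|)}$ modulo vanishing cutoff errors.

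To deduce the operator-norm inequality $\mathfrak v\ge 2\|Q(x)\|$, I would use that $Q(x)$ is a Ces\`aro average and therefore commutes with $H_{\eff}$, so under the a.c.~spectrum hypothesis of Theorem~\ref{groupvelth} it is a multiplication operator in the spectral representation of $H_{\eff}$ with essential supremum $\|Q(x)\|$. If $\|Q(x)\|=q$, a spectral wave packet concentrated where $|Q(x)|$ is close to $q$ evolves under $e^{-isH_{\eff}}$ ballistically at a group velocity arbitrarily close to $q$ in some direction; at XY time $t$ this corresponds to effective time $s=2t$ and to a spatial drift $\approx 2qt$, yielding matrix elements $|\langle\delta_k,Q(x)\delta_{j_0}\rangle|$ of order $1$ at separations $|k-j_0|\approx 2qT_m$. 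Combined with the exponential bound above, this forces $2q\le \mathfrak v$. The main technical obstacle is the rigorous execution of this wave-packet step: one must transfer an energy-compact one-particle wave packet back to a local spin observable through Jordan--Wigner and control the resulting truncation and averaging errors so that the ballistic lower bound survives in the limit, which I expect to follow from standard stationary-phase estimates combined with the strong convergence supplied by Theorem~\ref{main}.
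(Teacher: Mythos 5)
Your plan correctly identifies the two structural ingredients: the Jordan--Wigner transformation (with the time doubling that produces the factor $2$), and the need to combine a transport lower bound from Theorem~\ref{main} with the Lieb--Robinson upper bound. However, the middle step --- time-averaging the Lieb--Robinson bound and ``summing over $j_0$'s'' to reconstruct $Q(x)$ --- is where the proposal breaks down, and the paper's argument takes a genuinely different route precisely to avoid this issue.

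The trouble is that for the single-bond observable $P_{j_0}$ you pair with $\mathcal B_k$, the Ces\`aro average $\frac{1}{T}\int_0^T e^{iH_{\eff}t}P_{j_0}e^{-iH_{\eff}t}\,dt$ converges strongly to $0$ by the RAGE theorem ($P_{j_0}$ is finite rank and $H_{\eff}$ has a.c.\ spectrum), so each such time-averaged matrix element goes to zero. The operator $Q$ is not recovered by summing these vanishing contributions over a finite window of $j_0$; one needs the full infinite sum $A=\sum_j P_j$ before taking the limit, and the exchange of limit and sum is exactly what fails. Relatedly, averaging the right-hand side $Ce^{-\eta(\mathfrak v t - |k-j_0|)}$ over $t\in[0,T]$ does not produce a bound of the form $Ce^{-\eta(\mathfrak v T - |k-j_0|)}$: after truncating the trivial regime $t< |k-j_0|/\mathfrak v$, the time average is only $O(|k-j_0|/(\mathfrak v T))$, which is polynomial, not exponential, and in any case this would give a $T$-dependent bound on the $T$-independent quantity $\langle\delta_k, Q(x)\delta_{j_0}\rangle$, forcing it to vanish --- a contradiction, which signals that this route cannot work as stated. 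The concluding ``wave-packet'' paragraph is also not on firm ground: you would need a quantitative ballistic lower bound at a \emph{single} time, and the assertion that matrix elements of the fixed operator $Q(x)$ are of order $1$ ``at separations $\approx 2qT_m$'' conflates the time-dependent evolution with the limiting operator.

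The paper sidesteps all of this by never time-averaging the Lieb--Robinson estimate. Instead (Theorem~\ref{lieblower} via Proposition~\ref{transport}), it uses that $\frac{1}{T_k}X(T_k)\to Q$ in the strong resolvent sense (Theorem VIII.25 of \cite{RS1}) and applies a continuous spectral cutoff $\varphi$ supported near $\|Q\|$; since $\chi_{[\|Q\|-\eps/2,\|Q\|]}(Q)\delta_l\neq 0$ for some bounded $l$, one gets $\|\chi_{[T_k(\|Q\|-\eps), T_k(\|Q\|+\eps)]}(X(T_k))\delta_l\|\ge C$ for large $k$, hence a \emph{lower} bound $|(\delta_r, e^{-iT_kH_{\eff}}\delta_l)|^2\ge C/T_k$ at a single time $T_k$ with $|r-l|\approx(\|Q\|-\eps)T_k$. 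Proposition~\ref{observables} relates $|(\delta_r, e^{-2itH_{\eff}^\Lambda}\delta_l)|$ to $\|[c_l(t),a_r^*]\|$, and the Lieb--Robinson bound at the single time $t=T_k/2$ then yields $C/\sqrt{T_k}\le C_1 e^{-\eta((\|Q\|-\eps)T_k - L(\eps)-1-\mathfrak v T_k/2)}$, forcing $\mathfrak v\ge 2(\|Q\|-\eps)$. If you want to repair your proof, replace the time-averaging step by an argument at a single time using the strong resolvent convergence of $X(T_k)/T_k$; the Jordan--Wigner bookkeeping you set up will then slot in.
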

\subsection{Concrete classes of operators}
Theorems \ref{main}, \ref{groupvelth} were formulated under some abstract assumptions. However, these assumptions hold for a wide class of operators.
\begin{cor}
	\label{concrete1}
	Let $H(x)$ be a one-frequency $($i.e. $d=1)$ quasiperiodic Schr\"odinger 
	operator \eqref{h_def} with $v\in C^{\omega}(\T)$, Diophantine frequency $\alpha$, and purely absolutely continuous spectrum. Then the statements of Theorems $\ref{main}$, $\ref{groupvelth}$, and Corollary $\ref{lieb_cor}$ hold.
\end{cor}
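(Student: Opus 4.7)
Both Theorems \ref{main} and \ref{groupvelth}, and hence Corollary \ref{lieb_cor}, apply once one verifies two conditions for this class of operators: (i) the family $H(x)$ is $L^2$-degree $0$ reducible in the sense of Section 2, and (ii) the density-of-states measure $dN$ is absolutely continuous. The plan is to deduce both from the Diophantine analytic one-frequency reducibility theory of Eliasson and Avila--Jitomirskaya.

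For (i): pure absolute continuity of the spectrum combined with Kotani's theorem forces the Lyapunov exponent of the Schr\"odinger cocycle $(\alpha,S_{v,E})$ to vanish on the spectrum of $H(x)$. Avila's global theory for one-frequency analytic cocycles, together with the quantization of the acceleration, then places the cocycle in the \emph{subcritical} regime for Lebesgue-a.e.\ $E$ in the spectrum (the critical set, where the acceleration vanishes but the Lyapunov exponent does not extend to zero on a complex strip, has Lebesgue measure zero). For subcritical cocycles over a Diophantine rotation, Avila's Almost Reducibility Theorem combined with Eliasson's KAM reducibility theorem yields analytic conjugation of $(\alpha,S_{v,E})$ to the constant rotation $A_\star$ of \eqref{astar}, provided the fibered rotation number $\rho(E)$ is Diophantine with respect to $\alpha$. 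The resonance set $\rho(E)\in\frac{1}{2}(\alpha\Z+\Z)$ is countable and the non-Diophantine set of $\rho$-values is Lebesgue null; both are excluded. The resulting conjugation lies in $\cw(\T;\sl2r)$ and, after a constant complex change of basis diagonalizing the rotation, yields a map into $\GL(2,\C)$ that is in $L^2$ with $|\det B|=1$ and with trivial topological degree on $\T$, matching the definition in Section 2.

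For (ii): the same subcritical reducibility in the Diophantine analytic regime implies, by a theorem of Amor (later refined by Avila--Jitomirskaya), that $N(E)$ is $C^\infty$ on the subcritical set; in particular $dN$ is absolutely continuous on the spectrum. Consequently the Lebesgue-null exceptional $E$-sets encountered in (i) are also $dN$-null, so the ``a.e.'' clause in the definition of $L^2$-degree $0$ reducibility is satisfied. With (i) and (ii) in hand, Theorems \ref{main} and \ref{groupvelth} apply and deliver Corollary \ref{lieb_cor}. The main obstacle I expect is the bookkeeping required to synchronize the various a.e.\ exceptional sets (Lebesgue vs.\ $dN$; resonant vs.\ non-resonant rotation numbers; critical vs.\ subcritical $E$) into a single $dN$-null set, together with arranging the conjugation in the precise $\GL(2,\C)$, determinant-modulus-one, degree-zero normalization used in Section 2; both are routine once the $C^\infty$ regularity of $N$ on the subcritical set is available.
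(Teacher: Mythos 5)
Your proposal is essentially correct in outcome but takes a longer and heavier route than the paper, and has two soft spots worth flagging.

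The paper's proof is a one-liner: by the results of Avila--Krikorian \cite{AK1} (and \cite{AFK}), for analytic one-frequency $v$ and Diophantine $\alpha$ the cocycle $(\alpha,S_{v,E})$ is \emph{analytically} reducible for Lebesgue-almost every $E$ at which the Lyapunov exponent vanishes; combined with Kotani theory (pure a.c.\ spectrum forces $\gamma=0$ on the spectrum) and with Property~2 of Section~4 (any continuously reducible family is automatically degree~$0$ reducible), this yields the hypotheses of Theorems \ref{main} and \ref{groupvelth}. Your route instead passes through Avila's global theory and quantization of acceleration to place a.e.\ $E$ in the subcritical regime, and then through ART plus Eliasson's KAM. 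That is a valid modern argument, but it is a detour: \cite{AK1} already gives the reducibility conclusion directly from $\gamma(E)=0$ and Diophantine $\alpha$, without needing the subcritical/critical dichotomy or ART. It is also worth noting that the measure-zero nature of the critical set for \emph{every} analytic one-frequency potential is itself a nontrivial theorem from the global theory, so if you go your way you need to cite it explicitly.

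Two points need to be tightened. First, your claim that the analytic conjugation ``yields a map into $\GL(2,\C)$ \ldots with trivial topological degree'' is not automatic. Eliasson/ART-type reducibility gives conjugacy to \emph{some} constant cocycle, and the conjugacy a priori carries a topological degree (equivalently, the constant matrix may not be exactly $A_\star$ of \eqref{astar}); untwisting that degree is precisely the content of Property~2 in Section~4 (from \cite{JK}), which you should invoke instead of asserting the normalization. Second, you invoke Amor / Avila--Jitomirskaya $C^\infty$-regularity of $N$ on the subcritical set to get absolute continuity of $dN$. This is much stronger than needed and not stated optimally (Amor's result is H\"older regularity, not $C^\infty$). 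In fact absolute continuity of $dN$ is immediate from the hypotheses: since $H(x)$ has purely a.c.\ spectrum for a.e.\ $x$, any Lebesgue-null set $\Delta$ satisfies $(\mathbb E_{H(x)}(\Delta)\delta_0,\delta_0)=0$ for a.e.\ $x$, hence $N(\Delta)=\int_{\T^d}(\mathbb E_{H(x)}(\Delta)\delta_0,\delta_0)\,dx=0$. This also closes the bookkeeping you worry about at the end: once $dN\ll$ Lebesgue, every Lebesgue-null exceptional set is automatically $dN$-null, and the synchronization is trivial.
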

\begin{proof}
	From the results of \cite{AK1,AFK}, it follows that the operator family $H(x)$ is 
	analytically reducible for Lebesgue almost all energies for which the Lyapunov exponent vanishes. Hence, it satisfies the assumptions of mentioned theorems.
\end{proof}
\begin{remark}
	In \cite{BJ}, it is shown that if $v\in C^{\omega}(\T)$ and $\alpha$ is Diophantine, 
	then there exists $\lambda_0(v)>0$ such that the spectrum of $H(x)$ with the potential $\lambda v$ will be purely absolutely continuous for $\lambda<\lambda_0(v)$.
\end{remark}
\begin{cor}
	\label{concrete2}
	Let $H(x)$ be a multi-frequency quasiperiodic Schr\"odinger operator family with Diophantine frequency vector $\alpha$ and the potential $\lambda v$, where $v\in C^{\omega}(\T^d)$. There exists $\lambda_0(\alpha,v)>0$ such that, for $\lambda<\lambda_0(\alpha,v)$, the statements of Theorems $\ref{main}$, $\ref{groupvelth}$, and Corollary $\ref{lieb_cor}$ hold.
\end{cor}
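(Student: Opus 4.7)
The plan is to verify the two abstract hypotheses underlying Theorems \ref{main} and \ref{groupvelth} --- namely $L^2$-degree $0$ reducibility for density-of-states-almost every energy, and absolute continuity of the density of states --- at sufficiently small coupling, and then to quote the theorems. The main analytic input is a multi-frequency KAM reducibility theorem in the perturbative regime (going back to Dinaburg--Sinai in one frequency, Eliasson for analytic $v$, and extended to the multi-frequency analytic setting by Chulaevsky--Sinai, Hadj Amor, and Hou--You). For $\alpha$ Diophantine and $v\in C^{\omega}(\T^d)$, this produces $\lambda_0(\alpha,v)>0$ such that for $\lambda<\lambda_0$ the cocycle $(\alpha,S_{\lambda v,E})$ is analytically conjugate to the constant rotation $A_\star$ of \eqref{astar} for every $E$ in the spectrum whose rotation number $\rho(E)$ is Diophantine relative to $\alpha$. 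The conjugating matrix is analytic on $\T^d$ with values in $\mathrm{SL}(2,\R)$, hence lies in $L^2(\T^d;\mathrm{GL}(2,\C))$ and satisfies $|\det B(\cdot,E)|\equiv 1$, which is precisely the condition \eqref{reducibility}. The resonant set in rotation-number space can be made of arbitrarily small Lebesgue measure by shrinking $\lambda$, and by the identity $N(E)=1-2\rho(E)$ its preimage carries zero density-of-states measure.

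For absolute continuity of the density of states at small coupling, I would appeal to the same KAM construction: on the full-DOS-measure set of reducible energies the rotation number is a smooth function of $E$ with nonvanishing derivative, so $N=1-2\rho$ inherits the same regularity and is, in particular, absolutely continuous. This is already recorded in the multi-frequency KAM references, and it can alternatively be obtained from Kotani-type arguments since the Lyapunov exponent vanishes identically on the spectrum in this regime. With both hypotheses in hand, the conclusions of Theorems \ref{main} and \ref{groupvelth} apply verbatim, and Corollary \ref{lieb_cor} then supplies the Lieb--Robinson lower bound $\mathfrak v\ge 2\|Q(x)\|$.

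The only real obstacle is to pin down a multi-frequency KAM statement that delivers reducibility on a set of \emph{full} density-of-states measure rather than merely positive measure, with a smallness threshold depending on $\alpha$ only through its Diophantine constants and on $v$ only through a strip of analyticity. This is standard in the modern literature, so no new perturbative estimates are required; the substance of the corollary lies in citing the sharpest available version and transferring it into the abstract $L^2$-reducibility framework used in Theorem \ref{main}.
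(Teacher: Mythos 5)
Your proposal follows essentially the same route as the paper: both rest on Eliasson's multi-frequency KAM reducibility theorem \cite{Eli}, which for $\lambda<\lambda_0(\alpha,v)$ gives analytic reducibility of $S_{\lambda v,E}$ for $N$-almost every $E$, and hence (via Property 2 of Section 4) $L^2$-degree $0$ reducibility. The paper's proof is just a one-line citation of \cite{Eli}, whereas you additionally spell out the absolute continuity of the density of states needed for Theorem \ref{groupvelth}; that elaboration is reasonable, though note that the ``Kotani-type'' alternative you mention (vanishing Lyapunov exponent) gives absolute continuity of the spectral measures rather than directly of $N$, so the KAM-based regularity of $\rho(E)$ is the argument to rely on.
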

\begin{proof}
	In \cite{Eli}, it is shown that given $v\in C^{\omega}(T)$ and a fixed Diophantine frequency vector $\alpha$, one can find $\lambda_0(\alpha,v)>0$ such 
	that $S_{\lambda v,E}$ is analytically reducible for $N$-almost every $E$ for $\lambda<\lambda_0(\alpha,v)$, from which all the statements follow.
\end{proof}

%\subsection{Concrete classes of operators}
%Theorems \ref{main}, \ref{groupvelth} were formulated under some abstract assumptions. However, these assumptions hold for a wide class of operators.

%In \cite{AK1,AFK}, it is shown that
%if $\alpha\in \R\setminus\Q$ is Diophantine, $v\in C^{\omega}(\T)$ and the spectrum of \eqref{h_def} is purely absolutely continuous, then 
%the Schr\"odinger cocycle $(\alpha,S_{v,E})$ is $C^{\omega}$-reducible for almost every $E\in \Sigma$ and, as a consequence, for $N$-almost every $E$. Note that, in the regime of (nonperturbatively) small coupling, 
%absolutely continuous spectrum is guaranteed due to \cite{BJ}. Of course, this also 
%implies absolute continuity of the IDS and satisfies the hypothesis of Theorem %\ref{groupvelth}.

%In the multi-frequency case, there is a perturbative result \cite{Eli}, which implies that given $v\in C^{\omega}(T)$ and a fixed Diophantine frequency vector $\alpha$, one can find $\lambda_0(\alpha,v)>0$ such 
%that $S_{\lambda^{-1}v,E}$ is analytically reducible for $N$-almost every $E$.
%The spectrum (and the IDS) will also be absolutely continuous, and hence Theorems \ref{main} and \ref{groupvelth} also hold in this case.

\section{Properties of $L^2$-degree 0 reducible operators}
{\it Aubry duality} is a relation between spectral properties of $H(x)$ and the dual Hamiltonian $\wt H(\theta)$ which is the following operator in $l^2(\Z^d)$:
\beq
\label{h_dual}
(\wt H(\theta)\psi)_m=\sum\limits_{m'\in \Z^d}\hat v_{m'} \psi_{m-m'}+2\cos 2\pi(\alpha\cdot m+\theta)\psi_m,
\eeq
where
$$
v(x)=\sum_{k\in\Z}\hat{v}_k e^{2\pi i k\cdot x}.
$$
Denote the corresponding direct integral spaces (for $H$ and $\wt H$ respectively) by
$$
\mathfrak{H}:=\dirintd l^2(\Z)\,dx,\quad \widetilde{\mathfrak{H}}= {\int_{\T}^{\oplus}l^2(\Z^d)}\,d\theta.
$$
Consider the unitary operator $\mathcal U\colon \mathfrak{H}\to \widetilde{\mathfrak{H}}$ defined on vector functions $\Psi=\Psi(x,n)$ as
\beq
\label{u_def}
(\mathcal {U} \Psi)(\theta,m)=\hat{\Psi}(m,x+\alpha\cdot m),
\eeq
where $\hat\Psi$ denotes the Fourier transform over $x\in \T^d\to m\in \Z^d$ combined with the inverse Fourier transform $n\in \Z\to \theta\in \T$. Let also
$$
\mathcal{H}:=\dirintd H(x)\,dx,\quad \wt{\mathcal{H}}:=\dirint \wt H(\theta)\,d\theta.
$$
Aubry duality can be formulated as the following 
equality of direct integrals.
\beq
\label{Duality}
\mathcal{U} \mathcal H \mathcal U^{-1}=\wt{\mathcal H}.
\eeq 
It is well known (see, for example, \cite{GJLS}) that the spectra of $H(x)$ and $\wt H(\theta)$ coincide for all $x,\theta$. We denote then both by $\Sigma$. 
Moreover, the IDS of the families $H$ and $\wt H$ also
coincide.We will use the following properties of the fibered rotation number and $L^2$-degree 0 reducibility, see \cite{JK} and references therein for more detail.
\begin{enumerate}
\item The rotation number is a continuous non-increasing function of $E$. It is locally constant on $\R\setminus \Sigma$, where $\Sigma=\sigma(H(x))$ (this set does not depend on $x$), and its values on $\R\setminus\Sigma$ are $\frac12 \Z$-linear combinations of $\alpha_1,\ldots,\alpha_d$. It maps $\Sigma$ onto $[0,1/2]$.
\item Suppose that \eqref{reducibility} holds for {\it some} $A_{\star}$ for $N$-almost every $E$ with $B$ continuous in $x$. Then, $N$-almost every $E$, it also holds with $A_{\star}$ given by \eqref{astar}. In other words, a continuously reducible quasiperiodic operator family is automatically degree 0 reducible.
\item For almost every $\theta\in [0,1/2]$, there exists a unique $E\in \R$ such that $\rho(E)=\theta$. 
Denote this $E$ by $E(\theta)$. For almost every $\theta\in [0,1/2]$, the cocycle $S_{v,E(\theta)}$ 
satisfies \eqref{reducibility}. By $f(x,\theta)$, denote the matrix element $(B^{-1}(x,E(\theta)))_{11}$. The function $f$ can be chosen to be $L^2$-normalized in $x$ and measurable in $\theta$. We assume that from now on.
\item Extend $f(x,\theta)$ to $\theta\in [-1/2,0]$ by $f(x,-\theta):=\overline{f(x,\theta)}$, and then extend it with period 1 to all $\theta\in \R$. Then, for almost every $\theta$, the dual Hamiltonian $\wt H(\theta)$ has purely point spectrum with eigenvalues $E(\theta-k\cdot \alpha)$, $k\in \Z^d$, and eigenvectors $u(\theta,k)_m=\hat{f}(m+k,\theta-k\alpha)$, where $\hat f$ is the Fourier transform of $f$ over $x$. We use the convention that the lower index $m\in \Z^d$ of the vector enumerates its components, $k\in \Z^d$ enumerates different eigenvectors of the same operator, and $\theta$ is the ergodic parameter enumerating the operators $\wt H(\theta)$.
\item For almost every $\theta$, the function $d(\theta)=e^{2\pi i \theta}f(x,\theta)\overline{f(x-\alpha,\theta)}-e^{-2\pi i \theta}\overline{f(x,\theta)}f(x-\alpha,\theta)$ is well defined (i.e. does not depend on $x$ almost surely) and non-zero. For these $\theta$, one can choose $B(x)^{-1}$ of the form
$$
\frac{1}{d(\theta)^{1/2}}\begin{pmatrix}
f(x,\theta) &\overline{f(x,\theta)}\\ e^{-2\pi i \theta} f(x-\alpha,\theta)& 
e^{2\pi i \theta} \overline{f(x-\alpha,\theta)}
\end{pmatrix}.
$$
\end{enumerate}
In the sequel, we will denote the matrix $B(x)$ obtained for $E=E(\theta)$ by $B(x,\theta)$.

\section{Proofs of Theorems \ref{main} and \ref{groupvelth}}
We first need a few auxiliary results.
Suppose that $\mathfrak H=\dirintd l^2(\Z)\,dx$. We will consider bounded decomposable operators of the form $\mathcal H=\dirintd H(x)\,dx$, where $H(\cdot)$ is an a. e. uniformly bounded measurable family of operators.
\begin{prop}
	\label{strong_conv}
	Suppose that $\mathcal H_n=\dirintd H_n(x)\,dx$ is a sequence of bounded decomposable operators. Then
	\begin{enumerate}
		\item If $\|H_n(x)\|\le C$ for a. e. $x\in \T^d$ and all $n\in \N$, and $H_n(x)\sconv H(x)$ for a. e. $x\in \T^d$, then $\mathcal H_n\sconv \dirintd H(x)\,dx$.
		\item If $\mathcal H_n\sconv \mathcal H$, 
		where $\mathcal H$ is a bounded operator 
		on $\dirintd l^2(\Z)\,dx$, then $\mathcal H$ is a bounded decomposable operator, and there exists a subset $\T_0\subset \T^d$ of full Lebesgue measure and a subsequence $\{n_k\}$ such that $H_{n_k}(x)\sconv H(x)$ for $x\in \T_0$.
		\item
		$\mathcal H$ has trivial kernel if and only if $H(x)$ has trivial kernel for a. e. $x\in \T^d$.
	\end{enumerate}
\end{prop}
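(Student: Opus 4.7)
The plan is to prove the three parts in order, leaning on dominated convergence, the characterization of decomposable operators as commutants of diagonal multiplications, and a measurable construction of kernel projections. For part $(1)$, the strong pointwise limit satisfies $\|H(x)\|\le C$ a.e., and for any $\Psi\in\mathfrak H$ one has
$$
\Bigl\| \mathcal H_n \Psi - \Bigl(\dirintd H(x)\,dx\Bigr) \Psi \Bigr\|^2 = \int_{\T^d} \|(H_n(x)-H(x))\Psi(x)\|^2\,dx.
$$
The integrand tends to $0$ a.e.\ and is dominated by $(2C\|\Psi(x)\|)^2\in L^1(\T^d)$, so dominated convergence closes the argument.

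For part $(2)$, Banach--Steinhaus gives $\sup_n\|\mathcal H_n\|<\infty$, which for decomposable operators is equivalent to the essentially uniform bound $\|H_n(x)\|\le M$ a.e., uniformly in $n$, and forces $\|\mathcal H\|\le M$ as well. A bounded operator on $\mathfrak H$ is decomposable iff it commutes with every multiplication operator $M_\phi$, $\phi\in L^\infty(\T^d)$; since each $\mathcal H_n$ has this property and the commutation relation passes to strong limits, $\mathcal H$ is decomposable and equals $\dirintd H(x)\,dx$ for some measurable family. To extract the subsequence, I fix a countable dense set $\{e_j\}\subset l^2(\Z)$ and apply $\mathcal H_n-\mathcal H$ to the constant vector-functions $\Psi_j(x)\equiv e_j$; convergence in $L^2(\T^d;l^2(\Z))$ combined with the standard diagonal argument yields a single subsequence $\{n_k\}$ and a full-measure set $\T_0\subset\T^d$ on which $H_{n_k}(x)e_j\to H(x)e_j$ for every $j$. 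The uniform bound then upgrades this to $H_{n_k}(x)\sconv H(x)$ for $x\in\T_0$.

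For part $(3)$, the direction ``$\Leftarrow$'' is immediate from $\|\mathcal H\Psi\|^2=\int\|H(x)\Psi(x)\|^2\,dx$. The reverse direction is the substantive one: assuming $E:=\{x:\ker H(x)\neq\{0\}\}$ has positive Lebesgue measure, I need to produce a nonzero element of $\ker\mathcal H$. I would represent the orthogonal projection $P(x)$ onto $\ker H(x)$ as the strong limit
$$
P(x)=\slim_{t\to+\infty}(I+tH(x)^{*}H(x))^{-1},
$$
valid for any bounded $H(x)$ by the spectral theorem applied to the nonnegative operator $H(x)^{*}H(x)$. Measurability of $H(\cdot)$ transfers by norm-continuity to each resolvent, and hence to $P(\cdot)$ as a strong pointwise limit. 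Covering $E$ by the measurable sets $E_j:=\{x\in E:P(x)e_j\neq0\}$ for the standard basis $\{e_j\}$ of $l^2(\Z)$, at least one $E_j$ has positive measure, and then $\Psi(x):=\chi_{E_j}(x)\,P(x)e_j$ is a nonzero element of $\mathfrak H$ lying in $\ker\mathcal H$.

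The only genuine obstacle is this last step of part $(3)$, namely producing a measurable family of kernel vectors without invoking heavy measurable selection machinery. The resolvent formula bypasses this, provided one checks that strong limits of measurable operator-valued functions remain measurable; this reduces to pointwise limits of measurable scalar functions via matrix elements against $\{e_j\}$, so the verification is routine but is the one place where care is required.
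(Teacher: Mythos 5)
Your proposal is correct and for parts (1) and (2) follows essentially the same route as the paper: dominated convergence for (1), and for (2) testing $\mathcal H_n-\mathcal H$ on constant vector-functions, passing from $L^2(\T^d)$-convergence to convergence in measure to an a.e.\ convergent subsequence, diagonalizing over a countable dense set, and invoking the uniform bound via Banach--Steinhaus to upgrade to strong operator convergence fiberwise. You are a bit more careful than the paper in (2), explicitly justifying that the strong limit $\mathcal H$ is decomposable via the commutant characterization against $L^\infty$-multiplications (the paper silently assumes this), and for (3) --- which the paper dismisses as ``well known'' --- you supply a clean proof using the measurable kernel projections $P(x)=\slim_{t\to\infty}(I+tH(x)^*H(x))^{-1}$ and a countable covering to produce a nonzero measurable section of the kernel; this is a perfectly standard and correct way to fill that gap.
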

\begin{proof}
	The first claim follows from the dominated convergence theorem applied to  the integral 
	$$
	\int_T \|H_n(x)f(x)-H(x)f(x)\|\,dx,
	$$
	where $f\colon \T^d\to l^2(\Z)$ is an element of $\dirintd l^2(\Z)\,dx$.
	To prove the second claim, denote by $e_k(\cdot)\in \dirint l^2(\Z)\,dx$ the constant vector function $x\mapsto \delta_k$. We have $\H_{n} e_k(\cdot)\sconv \H e_k(\cdot)$ in $\dirint l^2(\Z)\,dx$. Hence, $\int_{\T^d}\|H_n(x)e_k-H(x)e_k\|^2\,dx\to 0$. Since $L^2$-convergence implies convergence in measure, we get that there exists a subsequence such that $H_{n_l}(x)e_k\to H(x)e_k$ for almost every $x$. Applying Cantor 
	diagonal procedure, we can ensure that there exists a set of $x$ of full measure and a subsequence that converges on all basis vectors $e_k$. Since this sequence is also uniformly bounded, by 
	Banach--Steinhaus theorem, there will be strong operator convergence on that subsequence. Third claim is well known.\,\qedhere
\end{proof}

\begin{lemma}
	\label{pointspectrum}
	Let $A$ be a bounded operator and $H$ be a bounded self-adjoint operator with purely point spectrum. Let $\{\lambda_l\}$ be the distinct eigenvalues of $H$, and $P_l={\mathbb E}_H\{\lambda_l\}$ be the projection onto the corresponding eigenspace. Then
	$$
	\slim_{T\to+\infty}\frac{1}{T}\int_0^T e^{iHt}A e^{-iHt}\,dt=\sum\limits_l P_l A P_l.
	$$
\end{lemma}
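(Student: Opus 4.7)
My plan is to adapt the classical mean-ergodic / RAGE-type argument. First I set $B_T:=\frac{1}{T}\int_0^T e^{iHt}Ae^{-iHt}\,dt$; unitarity of $e^{iHt}$ gives $\|B_T\|\le\|A\|$ uniformly in $T$. Since $H$ has pure point spectrum, the algebraic linear span of its eigenvectors is dense in the ambient Hilbert space, so a standard $\eps/3$-argument combining this density with the uniform bound on $\{B_T\}$ reduces matters to proving strong convergence of $B_T\phi$ for an arbitrary eigenvector $\phi$.

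For such a $\phi$ with $H\phi=\lambda\phi$, let $l_0$ denote the unique index with $\lambda_{l_0}=\lambda$. Using the strong resolution of identity $\sum_l P_l=I$, I would write $A\phi=\sum_\nu P_\nu A\phi$, apply $e^{iHt}P_\nu=e^{i\lambda_\nu t}P_\nu$, and interchange the time integral with the sum (justified by absolute convergence in $L^1(0,T)$) to obtain
$$
B_T\phi=\sum_\nu c_\nu(T)\,P_\nu A\phi,\qquad c_\nu(T):=\frac{1}{T}\int_0^T e^{i(\lambda_\nu-\lambda)t}\,dt.
$$
By direct computation $c_{l_0}(T)\equiv 1$, whereas for $\nu\ne l_0$ one has $|c_\nu(T)|\le\min\bigl(1,\tfrac{2}{T|\lambda_\nu-\lambda|}\bigr)$, which is bounded by $1$ and tends to $0$ as $T\to\infty$.

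To conclude, I would take norms to get
$$
\|B_T\phi-P_{l_0}A\phi\|^2=\sum_{\nu\ne l_0}|c_\nu(T)|^2\,\|P_\nu A\phi\|^2.
$$
The weights $\|P_\nu A\phi\|^2$ are summable with sum $\le\|A\phi\|^2$, and the coefficients are dominated by $1$ and converge to $0$ termwise, so Lebesgue's dominated convergence theorem (applied to counting measure on the index set) forces the right-hand side to vanish. Finally, $\phi\in\Ran P_{l_0}$ implies $P_{l_0}A\phi=P_{l_0}AP_{l_0}\phi=\bigl(\sum_l P_l AP_l\bigr)\phi$, which identifies the limit. The only mildly technical point is the interchange of limit and infinite summation, which dominated convergence handles cleanly; everything else is routine.
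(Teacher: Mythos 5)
Your proof is correct and follows essentially the same route as the paper: reduce by uniform boundedness of $B_T$ and density of the eigenvector span to checking convergence on a single eigenvector $\phi$, decompose $A\phi$ along the spectral projections, observe the off-diagonal coefficients $c_\nu(T)$ are bounded by $1$ and vanish as $T\to\infty$, and pass to the limit using orthogonality of the $P_\nu A\phi$ plus a tail-cutting/dominated-convergence argument. (One small imprecision: the interchange of the $t$-integral with $\sum_\nu$ is not really ``absolute convergence in $L^1$''---$\sum_\nu\|P_\nu A\phi\|$ may diverge---but is justified because the partial sums of $\sum_\nu e^{i\lambda_\nu t}P_\nu A\phi$ converge in norm uniformly in $t\in[0,T]$.)
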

The right hand side can be considered as the ``diagonal part'' of $A$ with respect to the eigenspaces of $J$.
\begin{proof}
	The left hand side is uniformly bounded in $T$. Due to Banach--Steinhaus theorem, it is sufficient to check the convergence on the eigenvectors of $H$. Let $\psi_{l,j}$ be the eigenvectors of $H$, $J\psi_{l,j}=\lambda_l\psi_{l,j}$, where $j$ enumerates 
	different eigenvectors from the same eigenspace. We have
	$$
	\frac{1}{T}\int_0^t e^{iHt}A e^{-iHt}\,dt\,\psi_{l,j}=\sum_{l',k} \frac{1}{T}\int_0^T  e^{i(\lambda_{l'}-\lambda_l)t}\,dt(A\psi_{l,j},\psi_{l',k})\psi_{l',k}=
	$$
	$$
	=P_l A P_l \psi_{l,j}+\sum_{l'\neq l}\sum_k \frac{e^{i(\lambda_{l'}-\lambda_l)T}-1}{Ti(\lambda_{l'}-\lambda_l)}(A\psi_{l,j},\psi_{l',k})\psi_{l',k}=
	$$
	$$
	=P_l A P_l \psi_{l,j}+\sum_{{l'}\neq l}e^{i(\lambda_{l'}-\lambda_l)T/2}\frac{\sin((\lambda_{l'}-\lambda_l)T/2)}{(\lambda_{l'}-\lambda_l)T/2}P_{l'} A\psi_{l,j}.
	$$
	To show that the last sum converges to $0$, note that the terms are mutually orthogonal, and $\sum_{l'\neq l}\|P_{l'} A \psi_{l,j}\|^2\le \|A\psi_{l,j}\|^2$. Hence, given $\er>0$, there exists $N$ such that the sum over $l'> N$ is norm bounded by $\er$ for 
	all $T>0$. The sum over $l'<N$ has finite number of terms each of which goes to 0 as 
	$T\to +\infty$.
\end{proof}
\begin{lemma}
	\label{dual_convergence}
	Under the assumptions of Theorem $\ref{main}$, let $\wt H(\theta)$ be the dual operator family. Let also $(\wt A(\theta)\psi)_n=2\sin (2\pi (n\alpha+\theta))\psi_n$. Then, for almost every $\theta\in \T$, there exists a strong limit
	$$
	\wt Q(\theta)=\slim\limits_{T\to \infty}\frac{1}{T}\int_0^T e^{i\wt H(\theta)t}\wt A(\theta)e^{-i\wt H(\theta)t}\,dt,
	$$
	$\wt Q(\theta)$ has trivial kernel for almost every $\theta\in \T$.
\end{lemma}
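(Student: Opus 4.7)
The plan is to exploit the structure of $\wt H(\theta)$ provided by Aubry duality. By property 4 of Section 4, for almost every $\theta \in \T$ the operator $\wt H(\theta)$ has purely point spectrum, with eigenvectors $u(\theta, k)_m = \hat f(m+k, \theta - k\alpha)$ of unit $l^2(\Z^d)$-norm and eigenvalues $E(\theta - k\alpha)$, $k \in \Z^d$. Both $\wt H(\theta)$ and $\wt A(\theta)$ are bounded: the convolution part of \eqref{h_dual} is conjugate via the Fourier transform to multiplication by $v \in L^\infty(\T^d)$, and $\wt A(\theta)$ is multiplication by a function bounded by $2$. This sets up a direct application of Lemma \ref{pointspectrum}.

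First I would verify that the spectrum of $\wt H(\theta)$ is simple for a.e.\ $\theta$. With the conventions of property 4 ($E$ extended to be even and $1$-periodic), the equality $E(\theta - k\alpha) = E(\theta - k'\alpha)$ can hold only if $\theta - k\alpha \equiv \pm(\theta - k'\alpha) \pmod 1$; the $+$ case forces $(k - k')\cdot \alpha \in \Z$, impossible for $k \neq k'$ by $\Q$-linear independence of $\{1, \alpha_1, \ldots, \alpha_d\}$, while the $-$ case confines $\theta$ to a null set depending on $(k, k')$, and the countable union over such pairs remains null. Hence, for a.e.\ $\theta$, each spectral projection $P_k$ of $\wt H(\theta)$ is one-dimensional, distinct eigenvectors are automatically orthogonal, and Lemma \ref{pointspectrum} yields the strong limit
$$
\wt Q(\theta) = \sum_{k \in \Z^d} \<\wt A(\theta) u(\theta, k), u(\theta, k)\>\, P_k,
$$
a bounded diagonal operator in the orthonormal basis $\{u(\theta, k)\}_{k \in \Z^d}$.

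Next I would identify the diagonal entries with the quantity $d(\theta - k\alpha)$ of property 5. Setting $\theta' := \theta - k\alpha$ and reindexing by $n := m + k$,
$$
\<\wt A(\theta) u(\theta, k), u(\theta, k)\> = \sum_{n \in \Z^d} 2\sin 2\pi(n\alpha + \theta')\,|\hat f(n, \theta')|^2.
$$
Writing $2\sin 2\pi(n\alpha + \theta') = -i e^{2\pi i \theta'} e^{2\pi i n\alpha} + i e^{-2\pi i \theta'} e^{-2\pi i n\alpha}$ and noting that multiplication of $\hat f(n, \theta')$ by $e^{\pm 2\pi i n\alpha}$ corresponds to the translate $f(\cdot \pm \alpha, \theta')$ in $L^2(\T^d)$, Parseval converts the sum to
$$
-i \int_{\T^d}\bigl[e^{2\pi i \theta'} f(x, \theta')\,\overline{f(x - \alpha, \theta')} - e^{-2\pi i \theta'}\,\overline{f(x, \theta')}\,f(x - \alpha, \theta')\bigr]\,dx = -i\,d(\theta'),
$$
the last equality being the definition of $d$ in property 5 (whose integrand is a.s.\ constant in $x$).

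Finally, since $d(\theta) \neq 0$ for a.e.\ $\theta \in \T$ and $\Z^d$ is countable, $d(\theta - k\alpha) \neq 0$ for every $k$ and a.e.\ $\theta$. A bounded diagonal operator in an orthonormal basis with all non-zero diagonal entries has trivial kernel, yielding the second claim. The main technical obstacle is the Parseval-style identification of the matrix element with $-i\,d(\theta - k\alpha)$; everything else is a combination of the facts assembled in Section 4 with a direct appeal to Lemma \ref{pointspectrum}.
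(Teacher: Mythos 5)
Your proposal is correct and follows essentially the same route as the paper: reduce to Lemma~\ref{pointspectrum} using the pure point spectrum of $\wt H(\theta)$ from Property~4, compute the diagonal entries of $\wt Q(\theta)$ in the eigenbasis $u(\theta,k)$ by Parseval, identify them (up to a factor of $-i$) with $d(\theta-k\alpha)$, and conclude via a countable intersection using Property~5. Your version is slightly more careful than the paper's in two respects worth keeping: you verify explicitly that the eigenvalues $E(\theta-k\alpha)$ are pairwise distinct for a.e.\ $\theta$ (so the $P_k$ in Lemma~\ref{pointspectrum} are rank one and $\wt Q(\theta)$ is genuinely diagonal in the basis $u(\theta,k)$), and you track the factor $-i$, which makes the diagonal entry real as it must be, whereas the paper's displayed chain drops this factor and nominally equates a real quantity $2\im(\cdot)$ with the purely imaginary $d(\theta)$ — a harmless bookkeeping slip since only $d\neq 0$ is used.
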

\begin{proof}
	Suppose that $\theta$ is chosen to fulfill Properties 3--5 from the previous subsection. Then the operator $\wt H(\theta)$ has purely point spectrum, and the convergence is established by Lemma \ref{pointspectrum}. We only need to show that $\ker\wt Q(\theta)=\{0\}$. The operator $\wt Q(\theta)$ is diagonal in the basis $u(\theta,k)$ of 
	eigenvectors of $\wt H(\theta)$. Let us compute its diagonal entries, that is,
	\beq
		\label{wtq}
		\wt Q(\theta)u(\theta,k)=\l\{\sum\limits_{m\in \Z^d}2\sin(2\pi(m\cdot \alpha+\theta))|u(\theta,k)_m|^2\r\} u(\theta,k),\quad k\in \Z^d,
	\eeq
	where $u(\theta,k)_m=\hat{f}(m+k,\theta-k\alpha)$ are normalized eigenvectors of $\wt H(\theta)$. We need to show that none of the diagonal entries (i.e. sums in curly brackets) are zero. Let us first take $k=0$, and denote $u(\theta)_m=u(\theta,0)_m$.
	Then
	$$
		\sum_m 2\sin(2\pi(m\cdot \alpha+\theta))|u(\theta)_m|^2=
		2\im \sum_m e^{2\pi i (m\cdot \alpha+\theta)}\hat{f}(m,\theta)\overline{\hat{f}(m,\theta)}
	$$
	$$
	= 2\im \int_{\T^d} e^{2\pi i \theta}f(x+\alpha,\theta)\overline{f(x,\theta)}\,dx
	$$
	$$
	=\int_{\T^d} \l\{e^{2\pi i \theta}f(x+\alpha,\theta)\overline{f(x,\theta)}-e^{-2\pi i \theta}\overline{f(x+\alpha,\theta)}{f(x,\theta)}\r\}\,dx=d(\theta)\neq 0
	$$
	for almost every $\theta$ by Property 5. The case $k\neq 0$ is obtained by replacing $\theta$ with 	$\theta+k\cdot \alpha$. The set of $\theta$ such that the last quantity is non-zero for all $k\in \Z$ has full measure as a countable intersection of 
	full measure sets.
\end{proof}
\begin{remark}
	From the proof, it is easy to see that $\|\wt Q(\theta)\|=\|\wt Q(\theta+k\cdot \alpha)\|$, and that it is a measurable function of $\theta$. Hence, $\|\wt Q(\theta)\|$ is almost surely constant in $\theta$.
\end{remark}

\noindent {\bf Proof of Theorem \ref{main}. } Let $\mathcal U$ be the duality operator \eqref{u_def}. We have
$$
\mathcal U \l(\dirintd e^{i H(x) t}Ae^{-i H(x)t}\,dx\r)\mathcal U^{-1}=
\dirint  e^{i \wt H(\theta) t}\wt A(\theta) e^{-i \wt H(\theta)t}\,d\theta.
$$
Since the operators $\wt H(\theta)$ have purely point spectra for almost all $\theta$, Lemma \ref{dual_convergence} and Proposition \ref{strong_conv} imply that the Cesaro averages of the right hand side converge to a 
bounded operator with non-zero kernel 
which we will denote by $\wt {\mathcal Q}$. Hence,
$$
\frac{1}{T}\int_0^T \l\{\dirintd e^{i H(x) t}Ae^{-i H(x)t}\,dx\r\}\,dt\sconv \mathcal Q=\mathcal U^{-1}\wt {\mathcal Q}\mathcal U.
$$
By Proposition \ref{strong_conv}, $\mathcal Q$ is decomposable, and so $\mathcal Q=\dirint Q(x)\,dx$, where $Q(x)$ has trivial kernel for almost every $x$. Existence of a 
subsequence follows from Proposition \ref{strong_conv}, and the fact that $\|Q(x)\|$ is almost surely constant follows from the 
fact that $Q(x+\alpha)$ is unitary equivalent to $Q(x)$.\,\qed
\vskip 1mm
\noindent {\bf Proof of Theorem \ref{groupvelth}. } The argument is fairly standard, 
and was used (with \eqref{dnde} obtained), for example, in \cite{AFK} for the case of analytically reducible cocycles. For almost every $\theta$, we can consider
\beq
\label{b_def}
B(x,\theta)^{-1}=\frac{1}{|d(\theta)|^{1/2}}\begin{pmatrix}
	f(x,\theta) &\overline{f(x,\theta)}\\ e^{-2\pi i \theta} f(x-\alpha,\theta)& 
	e^{2\pi i \theta} \overline{f(x-\alpha,\theta)}
\end{pmatrix},
\eeq
where $d(\theta)=e^{2\pi i \theta}f(x,\theta)\overline{f(x-\alpha,\theta)}-e^{-2\pi i \theta}\overline{f(x,\theta)}f(x-\alpha,\theta)$ does not depend on $x$. We have
$$
B(x+\alpha,\theta)S_{v,E}(x)B(x,\theta)^{-1}=\begin{pmatrix}
e^{2\pi i \theta}&0\\ 0& e^{-2\pi i \theta}
\end{pmatrix},
$$
and $\det B(x,\theta)=\pm i$. Take $J:=\mp\frac{1}{\sqrt{2}}\begin{pmatrix}
1&1 \\ -i&i
\end{pmatrix}$. Define the new matrix $\wt B(x,\theta):=JB(x,\theta)\in \sl2r$. We have
$$
\wt B(x+\alpha,\theta)S_{v,E}(x)\wt B(x,\theta)^{-1}=\begin{pmatrix}
\cos\theta&-\sin\theta\\ \sin\theta& \cos\theta\end{pmatrix}=:R_{\theta}.
$$
From the proof of Lemma \ref{dual_convergence} and \eqref{b_def}, it 
follows that, for almost all $\theta$,
\beq
\label{qsup}
\|\wt Q(\theta)\|=\sup_{k\in \Z}|d(\theta+k\alpha)|=\sup_{k\in \Z}\frac{4}{\int_{\T^d}\|B(x,\theta+k\alpha)^{-1}\|_{\mathrm{HS}}^2\,dx}=\sup_{k\in \Z}\frac{4}{\int_{\T^d}\|\wt B(x,\theta+k\alpha)\|_{\mathrm{HS}}^2\,dx},
\eeq
where $\mathrm{HS}$ denotes the Hilbert--Schmidt norm; note that $B$ and $B^{-1}$ have the 
same norms.

We now need to recall some results from Kotani theory, see, for example, \cite{D_rev}.
The formula
\beq
\label{dnde}
\frac{dN}{dE}=\frac{1}{2\pi} \int_{\T^d}\frac{1}{\im m(E,x)}\,dx
\eeq
is valid for almost every $E$ for which $\gamma(E)=0$. Here $m$ is the $m$-function, or a measurable invariant 
section of the hyperbolic action of $S_{v,E}(x)$ with the properties
$$
\im m(E,x)>0,\quad m(E,x+\alpha)=S_{v,E}(x)\cdot m(E,x),
$$
where ``$\cdot$'' denotes the hyperbolic action of an $\sl2r$-matrix on the upper half plane, that is,
$$
\begin{pmatrix}
a&b\\c&d
\end{pmatrix}\cdot z=\frac{az+b}{cz+d}.
$$
Note that $L^2$-reducibility of $S_{v,E}$ implies that $\gamma(E)=0$. Since we assume that $N$ is absolutely continuous, Kotani's formula is valid for almost every $E$ with respect to $N$. If $C(x+\alpha)S_{v,E}(x)C(x)^{-1}\in \mathrm{SO}(2,\mathbb R)$ with {\it some} $C\in \sl2r$, then one can check that
\beq
\label{hyp_dos}
\|C(x)\|^2_{\mathrm{HS}}=\frac{1}{\im C(x)^{-1}\cdot i}+\frac{1}{\im C(x+\alpha)^{-1}\cdot i}.
\eeq

Kotani's theory also implies that, for Lebesgue almost every $E$ with $\gamma(E)=0$ (which, in our notation, would also be for almost every $\theta$), there exists $C(\cdot,\theta)\in L^2(\T^d,\sl2r)$ such that
$$
C(x+\alpha,\theta)S_{v,E}(x)C(x,\theta)^{-1}\in \mathrm{SO}(2,\mathbb R),\quad C(x,\theta)^{-1}\cdot i=m(x,E).
$$
We claim that, even though our matrix $\wt B$ may be different from $C$, we also have $\wt B(x,\theta)^{-1}\cdot i=m(x,E)$. Indeed, take 
$u(x):= \wt B(x,\theta)C(x,\theta)^{-1}\cdot i$.
We have $R_{\theta}\cdot u(x)=u(x+\alpha)$. The set of $x\in \T^d$ such that $u(x)=i$ has 
either zero or full measure (since $R_{\theta}$ preserves $i$). Assume that 
it has zero measure and take $w(x)=\frac{u(x)-i}{u(x)+i}$. A simple computation shows that
$w(x)$ is a measurable unitary function satisfying $e^{2\pi i \theta}w(x)=w(x+\alpha)$. 
This is only possible if $\theta\in \alpha_1\Z+\ldots \alpha_d\Z+\Z$. If we exclude these $\theta$, we can assume that $u(x)=i$ for almost every $x$, and hence $\wt B(x,\theta)^{-1}\cdot i=m(x,E)$. This implies that
$$
\frac{dN}{dE}=\frac{1}{4\pi}\int_{\T^d} \|\wt B(x,\theta)\|^2_{\mathrm{HS}}\,dx.
$$
From \eqref{qsup}, we obtain that
$$
\|\wt Q(\theta)\|=\sup_{k\in \Z}\frac{1}{\pi}\l.\l(\frac{dN}{dE}\r)^{-1}\r|_{E=E(\theta+k\alpha)}=\esssup\limits_{[0,1]} \frac{1}{\pi}\frac{dE}{dN}
$$
for almost every $\theta$.
\section{Proof of Corollary \ref{lieb_cor}}
Corollary \ref{lieb_cor} follows from the following result.
\begin{theorem}
	\label{lieblower}
	Suppose that an isotropic $XY$ spin chain satisfies the assumptions of Theorem $\ref{liebrob}$, and, for some real sequence $T_k\to +\infty$,
	\beq
\label{qlimit}
	\frac{1}{T_k}\int_0^{T_k} e^{iH_{\eff}t}Ae^{-i H_{\eff}t}\,dt\sconv Q.
	\eeq
	Then, \eqref{liebrobbound} can only hold for all $\Lambda\subset \Z$ if $\mathfrak v\ge 2\|Q\|$.
\end{theorem}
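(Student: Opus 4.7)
The proof combines the Jordan--Wigner (JW) transformation with a soft wavepacket/center-of-mass estimate and the self-adjointness of $Q$. Under JW, $H_{XY}$ becomes a quadratic fermion operator whose one-particle part has matrix $h^{(1)}=2H_{\eff}$, so the Heisenberg-evolved fermion operators satisfy $c_m(t_s)=\sum_n [e^{-2iH_{\eff}t_s}]_{mn}c_n$. Fermion propagation at spin-chain time $t_s$ therefore corresponds to $H_{\eff}$-propagation at time $2t_s$; this is the source of the factor $2$ in $\mathfrak{v}\ge 2\|Q\|$. Moreover, each $Q_T$ is self-adjoint (as $A$ is, and $Q_T$ is an average of its Heisenberg evolutes), hence so is the strong limit $Q$; consequently the operator norm is given by the numerical range, $\|Q\|=\sup_{\|\psi\|=1}|\langle\psi,Q\psi\rangle|$.

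\textbf{Step 1 (Lieb--Robinson for the fermion propagator).} The first key step is to extract from \eqref{liebrobbound} the bound
\[
\bigl|[e^{-2iH_{\eff}t_s}]_{mn}\bigr|\le Ce^{-\eta(|m-n|-\mathfrak{v}t_s)}\qquad \text{for all } m,n\in\Z,\ t_s\ge 0.
\]
This is done via the JW dictionary: the single-fermion matrix element appears as the scalar anticommutator of JW Majoranas $a_p:=c_p+c_p^*$, specifically $\{a_m(t_s),a_n\}=2\re[e^{-2iH_{\eff}t_s}]_{mn}$ (and analogously for $b_p:=-i(c_p-c_p^*)$), which is in turn a coefficient in the commutator of the \emph{local} two-site spin observables $a_ma_{m+1}=i\sigma^y_m\sigma^x_{m+1}$, $a_mb_{m+1}=i\sigma^y_m\sigma^y_{m+1}$, etc., to which \eqref{liebrobbound} applies directly. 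This JW computation is the main technical point, and I would carry it out in Section~6.

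\textbf{Step 2 (wavepacket estimate).} Fix a finitely supported unit vector $\psi$ with $\supp\psi\subset[-L,L]$, set $U:=e^{-iH_{\eff}T}$, and let $\psi':=U\psi$. Since $U=e^{-2iH_{\eff}(T/2)}$, Step~1 with $t_s=T/2$ yields, for any $v'>\mathfrak{v}$, that $|U_{nm}|$ decays exponentially in $|n-m|$ once $|n-m|>v'T/2$. Combined with $\supp\psi\subset[-L,L]$, this implies $\sum_{|n|>L+v'T/2}|n|\,|\psi'_n|^2=o(1)$ as $T\to\infty$. Using $\sum_n|\psi'_n|^2=1$,
\[
\Bigl|\sum_n n|\psi'_n|^2\Bigr|\le (L+v'T/2)\!\!\!\!\sum_{|n|\le L+v'T/2}\!\!\!\!|\psi'_n|^2+o(1)\le L+v'T/2+o(1),
\]
while trivially $|\langle\psi,X\psi\rangle|=\bigl|\sum_n n|\psi_n|^2\bigr|\le L$.

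\textbf{Step 3 (conclusion).} Since $A=i[H_{\eff},X]$ for the position operator $X$, we have $\tfrac{d}{dt}X(t)=A(t)$ and hence $TQ_T=X(T)-X$ where $X(T)=e^{iH_{\eff}T}Xe^{-iH_{\eff}T}$. By Step~2,
\[
|\langle\psi,Q_T\psi\rangle|=\tfrac{1}{T}\bigl|\langle\psi',X\psi'\rangle-\langle\psi,X\psi\rangle\bigr|\le\frac{2L+v'T/2+o(1)}{T}=\frac{v'}{2}+O\!\left(\frac{L}{T}\right).
\]
Passing to the limit along $T_k$ via the strong convergence \eqref{qlimit} gives $|\langle\psi,Q\psi\rangle|\le v'/2$ for every finitely supported unit $\psi$ and every $v'>\mathfrak{v}$. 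Letting $v'\downarrow\mathfrak{v}$ and using density of finitely supported vectors together with self-adjointness of $Q$,
\[
\|Q\|=\sup_{\|\psi\|=1}|\langle\psi,Q\psi\rangle|\le \frac{\mathfrak{v}}{2},
\]
i.e.\ $\mathfrak{v}\ge 2\|Q\|$. The main obstacle is the JW computation of Step~1; the remaining argument is a soft wavepacket spreading estimate.
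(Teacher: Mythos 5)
Your proof is correct, and it takes a genuinely different route from the paper's. The paper (following \cite{D_per}) works ``from below'': it uses strong resolvent convergence and continuous functional calculus (Theorems VIII.25 and VIII.20 of \cite{RS1}) to produce a \emph{witness} matrix element $|(\delta_r,e^{-iT_kH_{\eff}}\delta_l)|^2\ge C/T_k$ at distance $|r-l|\approx \|Q\|T_k$ (Proposition \ref{transport}), then applies the Lieb--Robinson bound through Proposition \ref{observables} to force $\mathfrak v\ge 2(\|Q\|-\eps)$. You instead work ``from above'': you use the LR-derived propagator decay to show that $\langle\psi',X\psi'\rangle$ grows at most like $v'T/2$ for finitely supported $\psi$, hence the numerical range of $Q_T=(X(T)-X)/T$ is at most $v'/2+O(L/T)$, and self-adjointness of $Q$ (which is preserved under strong limits of the self-adjoint $Q_{T_k}$) converts the numerical-range bound into $\|Q\|\le\mathfrak v/2$. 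The two proofs share the same Jordan--Wigner input -- a bound on $|[e^{-2iH_{\eff}t}]_{mn}|$ obtained by applying \eqref{liebrobbound} to suitable local observables -- but your Step~1 route via anticommutators of Majoranas recovered from two-site products is more circuitous than the paper's Proposition \ref{observables}, which notes that $c_l\in\mathcal O([m,l])$ is already a local observable (with $\dist([m,l],\{r\})=r-l$ independent of $\Lambda$) so the LR bound applies directly to $\|[c_l(t),a_r^*]\|$. Your argument is softer overall: it avoids the strong-resolvent-convergence and functional-calculus machinery, trading it for a Cauchy--Schwarz wavepacket estimate, and it gives $\|Q\|\le\mathfrak v/2$ directly as an operator-norm statement rather than via an $\eps$-approximate witness. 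The only part you would need to fill in carefully is the JW computation (as you note), and the routine passage from the finite-chain bound on $H_{\eff}^{\Lambda}$ to the full-line $H_{\eff}$ via strong limits; Proposition \ref{observables} in the paper would serve as a cleaner substitute for your Step~1 if you wished to streamline.
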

This result was essentially proved in \cite{D_per} with two minor differences. 
The first one is that we only assume that the limit exists on a subsequence. 
This difference is really minor and it can be easily traced that the proof remains the same. 
The other difference is that we only consider isotropic spin chains, in which case the 
effective Hamiltonian decouples. This simplifies some computations, and we choose to 
include some proofs in order to keep the text more self-contained.

\begin{prop}\cite{D_per}
	\label{transport}
	Under the assumptions of Theorem $\ref{lieblower}$, for any $\er>0$ there exists $k(\er)\in \N$ and constants $C(\er)$, $L(\er)$ such that for any $k\ge K(\er)$ one can find $l,r\in \Z$ with $|l|\le K(\er)$ and
	$$
	(\|Q\|-\er)T_k\le |r|\le (\|Q\|+\er)T_k
	$$
	such that
	$$
	|(\delta_r,e^{-i T_k H_{\eff}}\delta_l)|^2\ge \frac{C}{T_k}.
	$$
\end{prop}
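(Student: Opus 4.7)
The plan is first to exploit the identity $A = i[H_{\eff}, X]$, where $X$ is the position operator $(X\psi)_n = n\psi_n$; a direct commutator computation on $\delta_n$ verifies this. Consequently, on any compactly supported $\phi$ one has
$$
\int_0^{T} e^{iH_{\eff}t}Ae^{-iH_{\eff}t}\phi\,dt = (X(T) - X)\phi,
$$
where $X(T) := e^{iH_{\eff}T}Xe^{-iH_{\eff}T}$. Since $A$ is self-adjoint with $\|A\|=2$, each $A(t)$ is self-adjoint and uniformly bounded, so the strong limit $Q$ in \eqref{qlimit} is bounded self-adjoint. This self-adjointness is the key structural input for the whole argument.

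Next, I would use the spectral theorem together with density of finitely supported vectors to select, for any $\delta>0$, a unit vector $\phi$ with $\supp\phi\subset[-L(\delta),L(\delta)]$ satisfying $\langle\phi,Q\phi\rangle\ge \|Q\|-\delta$ (after possibly replacing $\phi$ by $-\phi$ to pick up the correct endpoint of the spectrum of $Q$). Expanding and using self-adjointness gives
$$
\|Q\phi-\|Q\|\phi\|^2=\|Q\phi\|^2-2\|Q\|\langle\phi,Q\phi\rangle+\|Q\|^2\le 2\|Q\|\delta.
$$
Setting $\psi_k:=e^{-iH_{\eff}T_k}\phi$ and applying the strong convergence $\frac{1}{T_k}(X(T_k)-X)\phi\to Q\phi$ together with $\|X\phi\|\le L$, both the mean and the second moment of the position observable in the state $\psi_k$ can be evaluated to leading order:
$$
\langle\psi_k,X\psi_k\rangle=\langle\phi,X(T_k)\phi\rangle=T_k\|Q\|(1+o(1)),
$$
$$
\langle\psi_k,X^2\psi_k\rangle=\|X(T_k)\phi\|^2=T_k^2\|Q\phi\|^2+o(T_k^2).
$$
Subtracting yields the variance bound
$$
\operatorname{Var}_{\psi_k}(X)=T_k^2\bigl(\|Q\phi\|^2-\langle\phi,Q\phi\rangle^2\bigr)+o(T_k^2)\le 2\|Q\|\,\delta\,T_k^2+o(T_k^2).
$$

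Choosing $\delta=\delta(\er)$ small (for instance $\delta=\er^3/(4\|Q\|)$) and invoking Chebyshev's inequality around the mean $\|Q\|T_k(1+o(1))$ then forces all but a fraction $O(\er)+o(1)$ of the $l^2$-mass of $\psi_k$ into the shell $\{n:(\|Q\|-\er)T_k\le n\le(\|Q\|+\er)T_k\}$, for all $k\ge K(\er)$. Since the shell contains at most $2\er T_k+1$ integers, a pigeonhole produces $r$ in the shell with $|\psi_k(r)|^2\ge c(\er)/T_k$.

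Finally, writing $\phi=\sum_{|l|\le L}c_l\delta_l$ with $\|c\|_2=1$ (hence $\|c\|_1\le\sqrt{2L+1}$), the triangle inequality
$$
|\psi_k(r)|\le\|c\|_1\max_{|l|\le L}|\langle\delta_r,e^{-iH_{\eff}T_k}\delta_l\rangle|
$$
produces $l$ with $|l|\le L(\er)=:K(\er)$ satisfying the required lower bound $|\langle\delta_r,e^{-iH_{\eff}T_k}\delta_l\rangle|^2\ge C(\er)/T_k$. The main obstacle is the second step: the self-adjointness of $Q$ is essential so that $\langle\phi,Q\phi\rangle$ can be driven to magnitude $\|Q\|$; without it one would only control $\|Q\phi\|$, and the variance $\|Q\phi\|^2-\langle\phi,Q\phi\rangle^2$ cannot be made small, leaving open the possibility that mass of $\psi_k$ spreads out toward the Lieb--Robinson velocity of $H_{\eff}$ rather than concentrating near $\pm\|Q\|T_k$.
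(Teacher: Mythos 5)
Your proof is correct in its main lines, but it takes a genuinely different route from the paper's. The paper works with the operator sequence $X(T_k)/T_k$ directly: it observes that $X(T_k)/T_k\to Q$ in the strong resolvent sense (via Reed--Simon Theorem VIII.25, using $\dom X$ as a common core), transfers this to continuous functional calculus (Theorem VIII.20) applied to a bump function $\varphi$ supported near $\|Q\|$, and thereby gets the lower bound $\|\varphi(X(T_k)/T_k)\,\delta_l\|\ge C$ for a \emph{single} basis vector $\delta_l$ chosen so that $\chi_{[v-\er/2,\,v]}(Q)\delta_l\neq 0$. The pigeonhole over $r$ then finishes the argument with no need for a triangle inequality. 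You instead work with an approximate maximizing vector $\phi$ of $\langle\phi,Q\phi\rangle$, compute the first and second moments of $X$ in the evolved state $\psi_k=e^{-iH_{\eff}T_k}\phi$, and invoke Chebyshev to obtain concentration of $\psi_k$ in the shell $\{n:(\|Q\|-\er)T_k\le n\le(\|Q\|+\er)T_k\}$; the extra cost is a final triangle-inequality/pigeonhole step to return from $\phi$ to a single $\delta_l$, which costs the factor $(2L+1)$. Both arguments are sound; the paper's is more conceptual (spectral projections rather than variance) and applies directly to $\delta_l$, while yours is self-contained and avoids invoking theorems on strong resolvent convergence of unbounded self-adjoint operators, at the price of the somewhat ad hoc $\|c\|_1\le\sqrt{2L+1}$ estimate.

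One small slip worth fixing: ``after possibly replacing $\phi$ by $-\phi$'' does nothing to $\langle\phi,Q\phi\rangle$. The correct statement is that either $\sup\sigma(Q)=\|Q\|$ or $\inf\sigma(Q)=-\|Q\|$, and in the latter case one runs the same argument with $Q$ replaced by $-Q$, obtaining concentration near $-\|Q\|T_k$; the proposition's conclusion uses $|r|$, so either sign suffices. Also, the display $\langle\psi_k,X\psi_k\rangle=T_k\|Q\|(1+o(1))$ is a slight abuse: the honest statement is $\langle\psi_k,X\psi_k\rangle/T_k\to\langle\phi,Q\phi\rangle\in[\|Q\|-\delta,\|Q\|]$; fortunately your subsequent variance computation correctly subtracts $\langle\phi,Q\phi\rangle^2$ rather than $\|Q\|^2$, so the argument goes through once $\delta$ is taken small relative to $\er$.
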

\begin{proof}
	Let $v=\|Q\|$. Without loss of generality, one can assume that $\chi_{[v-\er/2,v]}(Q)\neq 0$. There exists $l\in \Z$, $l\le K(\er)$, such that $\chi_{[v-\er/2,v]}(Q) \delta_l\neq 0$. Let us now relate the operator $Q$ with transport properties. Denote by $X$ the position operator in $l^2(\Z)$,
	$$
	(Xu)_n=n u_n,
	$$
	defined on the natural domain. The operator $Q$ is related with the Heisenberg evolution 
	of $X$. We have $\dom X(T)=\dom X$, and
	$$
	X(T)u=X u+\int_0^T A(t) u\,dt=Xu+\int_0^T e^{iH t}Ae^{-iH t}\,dt,\quad u\in \dom X.
	$$
	where
	$$
	A=\overline{i[X,H]},\quad (A\psi)_n=-i\psi_{n-1}+i\psi_{n+1}.
	$$
	We have $\frac{1}{T_k} X(T_k)u\to Qu$ for any $u\in \dom X$ as $k\to \infty$. This implies that the 
	sequence $\frac{1}{T_k}X(T_k)$ converges to $Q$ {\it in the strong resolvent sense}, see Theorem VIII.25 from \cite{RS1}. Let $\varphi$ be a continuous non-negative function 
	equal to 1 on $[v-\er/2,v]$ and vanishing outside $[v-\er,v+\er]$. 
	Due to Theorem VIII.20 from \cite{RS1}, we have
	$$
	\varphi\l(\frac{1}{T_k} X(T_k)\r) u\to \varphi\l(Q\r)u,\quad \forall u\in \dom X,
	$$
	and so 
	$$
	\|\chi_{[T_k(v-\er),T_k (v+\er)]}(X(T_k))\delta_l\|\ge \l\|\varphi\l(X(T_k)/T_k\r)\delta_l\r\|\ge C
	$$ 
	for sufficiently large $k$.
	The indicator function in the left hand side is simply the sum of projections onto $\delta_r$ with $r\in [T_k(v-\er),T_k (v+\er)]$. Hence,
	$$
	\sum\limits_{r\in [T_k(v-\er),T_k (v+\er)]} |(\delta_r,e^{-iT_k H}\delta_l)|^2\ge C,
	$$
	and so, for some $r\in [T_k(v-\er),T_k (v+\er)]$, we have
	$$
		|(\delta_r,e^{-iT_k H}\delta_l)|^2\ge \frac{C}{2 \er T_k+1}\ge \frac{C_1}{T_k}.\,\qedhere
	$$
	\end{proof}
	Let us now consider the $XY$ spin chain on $\Lambda=[m,n]\cap \Z$, and construct the 
	following observables
	$$
	a_j^*:=\frac12 (\sigma_j^x+i\sigma_j^y),\quad a_j:=\frac12 (\sigma_j^x-i\sigma_j^y),
	$$
	$$
	c_m:=a_m,\quad c_{m+j}:=\sigma_m^z\sigma_{m+1}^z\ldots\sigma _{m+j-1}^z a_{m+j},\quad 1\le j\le n-m.
	$$
	Let also $H_{\eff}^{\Lambda}:=\l. H_{\eff}\r|_{[m,n]}$ be the restriction of the original operator onto $[m,n]\cap \Z$.
	\begin{prop}
		\label{observables}
		Let $m\le l\le r\le n$. Then
		$$
		\|[c_l(t),a_r^*]\|\ge |(e^{-2itH_{\eff}^{\Lambda}}\delta_l,\delta_r)|.
		$$
	\end{prop}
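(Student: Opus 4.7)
The plan is to exploit the Jordan--Wigner transformation, which turns $H_\Lambda$ into a quadratic form in the fermionic operators $c_j, c_j^*$, and then extract $(e^{-2itH_{\eff}^\Lambda}\delta_l,\delta_r)$ as the value of a specific matrix element of $[c_l(t),a_r^*]$ on the fully-down state. Concretely, the standard identities $\sigma^x_j\sigma^x_{j+1}+\sigma^y_j\sigma^y_{j+1}=2(c_j^*c_{j+1}+c_{j+1}^*c_j)$ and $\sigma^z_j=2c_j^*c_j-1$ give, up to an additive scalar,
$$
H_\Lambda=-2\sum_{j,k\in [m,n]}(H_{\eff}^\Lambda)_{jk}\, c_j^*c_k.
$$
Because this is quadratic, the Heisenberg equations $\dot c_l=i[H_\Lambda,c_l]=2i\sum_k (H_{\eff}^\Lambda)_{lk} c_k$ are linear in the $c_k$'s, and integrate to $c_l(t)=\sum_k (e^{2itH_{\eff}^\Lambda})_{lk}\,c_k$.

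Next, let $|\Omega\rangle$ denote the all-spins-down state, which is simultaneously the fermionic vacuum: $c_k|\Omega\rangle=0$ for every $k$, and $|\Omega\rangle$ spans the zero-particle sector, so it is an eigenvector of $H_\Lambda$. Consequently $c_l(t)|\Omega\rangle=0$, and
$$
[c_l(t),a_r^*]|\Omega\rangle=c_l(t)a_r^*|\Omega\rangle.
$$
Since $c_r^*=a_r^*\prod_{j<r}\sigma^z_j$ and $\sigma^z_j|\Omega\rangle=-|\Omega\rangle$, one has $a_r^*|\Omega\rangle=(-1)^{r-m}c_r^*|\Omega\rangle$; a one-line calculation using the CAR together with $c_k|\Omega\rangle=0$ then yields
$$
c_l(t)\,c_r^*|\Omega\rangle=\sum_k (e^{2itH_{\eff}^\Lambda})_{lk}(\delta_{kr}-c_r^*c_k)|\Omega\rangle=(e^{2itH_{\eff}^\Lambda})_{lr}\,|\Omega\rangle.
$$

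Finally, because $|\Omega\rangle$ is a unit vector, the operator norm satisfies
$$
\|[c_l(t),a_r^*]\|\ge \|[c_l(t),a_r^*]|\Omega\rangle\|=|(e^{2itH_{\eff}^\Lambda})_{lr}|=|(e^{-2itH_{\eff}^\Lambda}\delta_l,\delta_r)|,
$$
the last equality following from the fact that $H_{\eff}^\Lambda$ is real symmetric, so its matrix elements of $e^{\pm 2itH_{\eff}^\Lambda}$ have the same modulus and $(\cdot)_{lr}=(\cdot)_{rl}$. The only delicate point in the whole argument is bookkeeping of the Jordan--Wigner signs, in particular verifying the identification $a_r^*|\Omega\rangle=(-1)^{r-m}c_r^*|\Omega\rangle$ and the sign of $M=-2H_{\eff}^\Lambda$ appearing in the one-particle evolution; once these are correct, everything else is a short CAR computation and a trivial norm bound. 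Note that the hypothesis $l\le r$ is not used in the argument itself; it is imposed only for compatibility with the subsequent application in Proposition \ref{transport}.
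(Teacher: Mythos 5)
Your argument is correct and reaches the same conclusion, but it diverges from the paper's proof in two worthwhile ways. The paper cites \cite{HSS} for the one-particle evolution $C^{\Lambda}(t)=e^{-2itH_{\eff}}C^{\Lambda}$ and then evaluates $[c_l(t),a_r^*]$ on the all-spins-\emph{up} state $u_\Lambda=\otimes(1,0)^T$, which is annihilated by the $a_j^*$'s, so only the $-a_r^*c_l(t)u_\Lambda$ half of the commutator survives. You instead derive the evolution from scratch by writing $H_\Lambda$ as a quadratic fermionic form, and evaluate on the all-spins-\emph{down} state $|\Omega\rangle$, which is the fermionic vacuum ($c_k|\Omega\rangle=0$); this kills the other half of the commutator and leads to the arguably cleaner CAR computation $c_k c_r^*|\Omega\rangle=\delta_{kr}|\Omega\rangle$. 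Both choices are legitimate, and your explicit derivation of the evolution is more self-contained than the paper's citation.

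One detail worth flagging: with the paper's convention $c_{m+j}=\sigma_m^z\cdots\sigma_{m+j-1}^z a_{m+j}$ (string of $\sigma^z$'s, not $-\sigma^z$'s), a direct computation gives $c_j^*c_{j+1}=-a_j^*a_{j+1}$, so in fact $\sigma_j^x\sigma_{j+1}^x+\sigma_j^y\sigma_{j+1}^y=-2(c_j^*c_{j+1}+c_{j+1}^*c_j)$ rather than $+2$ as you wrote. Tracing this through, the matrix in your quadratic form is not $-H_{\eff}^\Lambda$ but rather $\Sigma H_{\eff}^\Lambda\Sigma$ up to sign, where $\Sigma=\diag((-1)^j)$. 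Since this conjugation only introduces factors $(-1)^{l+r}$ in the matrix elements, and since $|(e^{\pm 2itH_{\eff}^\Lambda})_{lr}|$ is unaffected by either the sign of the exponent or these factors, your final inequality is unharmed. You already acknowledged that the sign bookkeeping is the delicate point; this is exactly where it bites, and it is worth double-checking which string convention is actually in force before asserting the intermediate identities.
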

	\begin{proof}
		It was shown in \cite{HSS} that, if $C^{\Lambda}=(c_m,\ldots,c_n)^T$, then
		$$
			C^{\Lambda}(t)=e^{-2 i t H_{\eff} }C^{\Lambda}.
		$$
		Consider the following special state
		$$
		u_{\Lambda}=\bigotimes_{j=m}^n \begin{pmatrix}
		1\\0
		\end{pmatrix}\in \mathfrak G_{\Lambda}.
		$$ 
		A simple application of commutation relations between $c_l$ and $a_r^*$ shows that
		$$
			\|[c_l(t),a_r^*]\|\ge |[c_l(t),a_r^*]u_{\Lambda}|=|(e^{-2itH_{\eff}^{\Lambda}}\delta_l,\delta_r)|,
		$$
		which completes the proof.
	\end{proof}
	{\noindent \bf Proof of Theorem 6.1.} Suppose that Lieb--Robinson bound holds with the 
	velocity $\mathfrak v$. Fix some $\er>0$ and obtain $l,r$ from Proposition \ref{transport}. 
	Due to Proposition \ref{observables} with $t=T_k/2$ and because of Lieb--Robinson bound, we have
	$$
		|(e^{-iT_k H_{\eff}^{\Lambda}}\delta_l,\delta_r)|\le\|[c_l(t),a_r^*]\|\le Ce^{-\eta(|r-l|-\mathfrak v T_k/2)}.
	$$
	This inequality must hold for all $\Lambda\subset \Z$, and hence (after taking strong limit) we have the following for $H_{\eff}$ if $k\ge K(\er)$:
	$$
	\frac{C}{\sqrt{T_k}}\le |(e^{-iT_k H_{\eff}}\delta_l,\delta_r)|\le C_1 e^{-\eta(|r-l|-\mathfrak v T_k/2)}\le C_1 e^{-\eta((\|Q\|-\er)T_k-L(\er)-1-\mathfrak{v}T_k/2)}.
	$$
	This inequality must hold for arbitrarily large $k$, which is only possible if $\mathfrak v\ge 2(\|Q\|-\er)$. Since $\er$ is arbitrary, this completes the proof.\,\qed
	
\begin{remark}
\label{ballexp_rem}
The (non-averaged) lower transport exponent is defined as
$$
\beta^-_{\psi}(p)=\liminf\limits_{T\to +\infty}\frac{\log(X(T)\psi,\psi)}{p\log t}.
$$
One can show that, under the assumptions of Theorem \label{lieblower}, we have
$$
\liminf\limits_{k\to\infty}\frac{1}{T_k^p}\l(|X(T_k)|\psi,\psi\r)\ge (|Q|^p\psi,\psi)>0,
$$
for any $\psi\neq 0$ with finite support. If the limit \eqref{qlimit} existed as 
$T\to\infty$ (not on a subsequence), one would be able to show that $\beta_{\psi}^-(p)=1$ for any $p>0$. In the case of quasiperiodic operators as in Theorem \ref{main}, one can show that
\beq
\label{ballexp}
\liminf\limits_{T\to\infty}\frac{1}{T}\int\limits_{\T^d}\l(|X(T,x)|^p\psi,\psi\r)\,dx\ge \int\limits_{\T^d}(|Q(x)|^p\psi,\psi)>0,
\eeq
where $X(T,x)=e^{iT H(x)}Xe^{-iT H(x)}$. Since $x$ is the ergodic parameter, 
\eqref{ballexp} can be understood as presence of ballistic transport in expectation.
\end{remark}
	\section{Acknowledgements} I would like to thank Svetlana Jitomirskaya for drawing my attention to
	the problem. I would also like to thank her, as well as Vojkan Jaksic, Milivoje Lukic and Gunter St\"olz, for valuable discussions. I am grateful to 
	Isaac Newton Institute for Mathematical Sciences, Cambridge, for support
	and hospitality during the programme Periodic and Ergodic Spectral
	Problems where part of this paper was completed. The research was supported by 
	AMS--Simons Travel Grant, 2014--2016, and partially by NSF DMS--1401204.

\end{document}